\newtheorem{theorem}{Theorem}[section]
\newtheorem{lemma}[theorem]{Lemma}
\newtheorem{meta-theorem}[theorem]{Meta-Theorem}
\definecolor{darkgreen}{rgb}{0,0.5,0}
\crefname{theorem}{Theorem}{Theorems}
\Crefname{lemma}{Lemma}{Lemmas}
\Crefname{conjecture}{Conjecture}{Conjectures}
\algnewcommand\algorithmicswitch{\textbf{switch}}
\algnewcommand\algorithmiccase{\textbf{case}}
\newcommand{\eps}{\varepsilon}
\newcommand{\congest}{$\mathsf{CONGEST}$\xspace}
\newcommand{\local}{$\mathsf{LOCAL}$\xspace}
\newcommand{\poly}{\operatorname{\text{{\rm poly}}}}
\renewcommand{\paragraph}[1]{\vspace{0.15cm}\noindent {\bf #1}:}
\newcommand{\FullOrShort}{full}
  \newcommand{\fullOnly}[1]{#1}
  \newcommand{\shortOnly}[1]{}
    \newcommand{\fullOnly}[1]{}
    \newcommand{\IncludePictures}[1]{}
\begin{document}

\date{}

\title{Distributed Coloring for Everywhere Sparse Graphs}

\author{
 Mohsen Ghaffari\\
  \small ETH Zurich \\
  \small ghaffari@inf.ethz.ch
\and
	 Christiana Lymouri\\
  \small ETH Zurich \\
  \small lymouric@student.ethz.ch
	\and
 }

\maketitle

\setcounter{page}{0}
\thispagestyle{empty}
\begin{abstract}

Graph coloring is one of the central problems in \emph{distributed graph algorithms}.
Much of the research on this topic has focused on coloring with $\Delta+1$ colors, where $\Delta$ denotes the maximum degree.
Using $\Delta+1$ colors may be unsatisfactory in sparse graphs, where not all nodes have such a high degree;
it would be more desirable to use a number of colors that improves with sparsity.
A standard measure that captures sparsity is \emph{arboricity}, which is the smallest number of forests into which the edges of the graph can be partitioned.

\medskip
We present simple randomized distributed algorithms that,
with high probability, color any $n$-node $\alpha$-arboricity graph:
\begin{itemize}[leftmargin=1.2cm]
\item using $(2+\eps)\cdot \alpha$ colors, for constant $\eps>0$, in $O(\log n)$ rounds, if $\alpha=\tilde{\Omega}(\log n)$, or
\item using $O(\alpha \log \alpha )$ colors, in $O(\log n)$ rounds, or
\item using $O(\alpha)$ colors, in $O(\log n \cdot \min\{\log\log n,\;  \log \alpha\})$ rounds.
\end{itemize}
\smallskip
These algorithms are nearly-optimal, as it is known by results of Linial [FOCS'87] and Barenboim and Elkin [PODC'08]
that coloring with $\Theta(\alpha)$ colors, or even $\poly(\alpha)$ colors, requires $\Omega(\log_{\alpha} n)$ rounds. The previously best-known $O(\log n)$-time result
was a deterministic algorithm due to Barenboim and Elkin [PODC'08], which uses $\Theta(\alpha ^2)$ colors.
Barenboim and Elkin stated improving this number of colors as an open problem in their Distributed Graph Coloring Book.

\medskip

\end{abstract}

\newpage

\section{Introduction and Related Work}\label{sec:Intro}
\vspace{-7pt}
Graph coloring is one of the central and well-studied problems in \emph{distributed graph algorithms},
and it has a wide range of applications in networks and distributed systems, prototypically in scheduling conflicting tasks, e.g., transmission
in a wireless network. Much of the focus in the area has been on obtaining fast distributed algorithms that compute a $(\Delta+1)$-coloring, where $\Delta$ denotes the maximum degree of the graph,
see e.g.~\cite{ColeVishkin, linial1987LOCAL, panconesi1992improved, panconesirizzi2000, fraigniaud2016local, harris2016distributed, Schneider:2010, Schneider2011, barenboim2012locality, pettie2015distributed,
chung2014LLL, Szegedy:1993, Kuhn:2009:WGC, Barenboim:2009:DLC, Barenboim:2010:DDV, Barenboim:2015:D9S, Awerbuch:1989:NDL}. 

For a vast range of ``sparse'' graphs, using $\Delta+1$ colors is rather unsatisfactory. To take the point to the extreme,
coloring a tree---which is obviously $2$-colorable---using $\Delta+1$ colors seems quite wasteful. Generally, it is more
desirable to obtain colorings in which the number of colors improves if the graph is sparse (everywhere). 

In this paper, we present simple and near-optimal randomized distributed algorithms that compute a coloring of the graph with a number of colors that
depends on its (everywhere) sparsity, formally the \emph{arboricity} of the graph. We next review the related definitions and discuss the known results. Then, we state our contributions.
\medskip

\subsection{Definitions and Setup}
\vspace{-5pt}
\paragraph{Graph Arboricity} A standard measure of (everywhere) sparsity of an undirected graph $G=(V, E)$ is its \emph{arboricity},
defined as $$\alpha(G) = \max \bigg\{\big\lceil \frac{|E(V')|}{|V'|-1} \big\rceil  \; \bigg|\; V'\subseteq V, |V'|> 2\bigg\},$$ 
that is, roughly speaking, the maximum ratio of the number of edges to the number of vertices, among all subgraphs of $G$.
By a beautiful result of Nash-Williams\cite{nash1964decomposition}, an alternative equivalent formulation is as follows: arboricity $\alpha(G)$ is the minimum number of edge-disjoint forests to which one can partition the edges of $G$.

\medskip
\paragraph{The Distributed Model} As standard in distributed graph algorithms, we work with the \local model of
distributed computation\cite{linial1987LOCAL, Peleg:2000}: The network is abstracted as an undirected graph $G = (V,E)$, with $n =|V|$.
Communication happens in synchronous message-passing rounds, and per-round, each node can send one message to each of its
neighbors. We note that all of our algorithms work also in the more restricted variant of the model, known as \congest~\cite{Peleg:2000} model,
where each message can contain at most $O(\log n)$ bits. Initially, nodes do not know the topology of the graph, except for knowing the arboricity of the graph $\alpha(G)$.
At the end, each node should know its own part of the output, e.g., its own color in a coloring.  

\subsection{Known Results and Open Problems}
\vspace{-5pt}
\paragraph{Existential Aspects} Any graph $G$ admits a $2\alpha(G)$-coloring, and this bound is tight. For the former, note that one can easily arrange vertices as $v_1$, \dots, $v_n$
so that each $v_i$ has at most $2\alpha(G)-1$ neighbors $v_j$ with higher index $j>i$. Then, one can greedily color this list from $v_n$ to $v_1$, using $2\alpha(G)$ colors.
For the latter, note that a graph made of several disjoint cliques, each with $2\alpha$ vertices, has arboricity $\alpha$, and chromatic number $2\alpha$.

\medskip
\paragraph{Known Lower Bounds for Distributed Algorithms} By a classic observation of Linial\cite{linial1987LOCAL}, it is well-understood that having a small arboricity is
not a local characteristic of graphs, and any distributed algorithm for coloring with $2\alpha(G)$ colors, or anything remotely close to it, needs $\Omega(\log n)$ rounds. 
Concretely, Linial\cite{linial1987LOCAL} pointed out that there exists a graph with girth $\Omega(\log_{\Delta} n)$ and chromatic number $\Omega(\Delta/\log \Delta)$
\cite{bollobas1978chromatic}\footnote{In his original writing \cite{linial1987LOCAL},
Linial referred to such high-girth graphs with chromatic number $\Omega(\sqrt{\Delta})$, but he also added remarks that the bound can probably be improved to $\Omega(\Delta/\log \Delta)$.}
and thus also arboricity $\alpha=\Omega(\Delta/\log \Delta)$.
Graphs of girth $\Omega(\log_{\Delta} n)$ are indistinguishable from trees (which have arboricity $\alpha=1$), for distributed algorithms with round complexity $o(\log_{\Delta} n)$.
Hence, no distributed algorithm with round
complexity $o(\log_{\Delta} n)$ can compute a coloring of a tree with maximum degree $\Delta$---which clearly has arboricity $\alpha=1$---with less than $\Omega(\Delta/\log \Delta) \gg \poly(\alpha)$ colors. 

Barenboim and Elkin\cite{barenboim2008sublogarithmic, barenboim2010sublogarithmic, barenboim2013monograph} presented a strengthening of this result and showed that
for any $\alpha$ and $q < n^{1/4}/\alpha$, any distributed algorithm for $O(q \cdot \alpha)$-coloring graphs with arboricity $\alpha$ requires $\Omega(\log_{q\alpha} n)$ rounds.

\medskip
\paragraph{Known Distributed Algorithms for $(\Delta+1)$ Coloring}
Distributed graph coloring started with Linial's seminal work~\cite{linial1992locality,linial1987LOCAL}.
Linial’s coloring algorithm is an $O(\log^*n)$-round deterministic distributed
algorithm that computes an $O(\Delta^2)$-coloring of the input graph. This can be easily turned into a $\Delta+1$ coloring in $O(\Delta^2)$ additional rounds. In~\Cref{sec:Preliminaries:Linial}, we present a variation of
Linial's algorithm due to Barenboim and Elkin~\cite{barenboim2008sublogarithmic, barenboim2010sublogarithmic}, which
produces an $O(\alpha^2)$-coloring in $O(\log n)$ rounds of a graph $G$ with arboricity $\alpha$. Since Linial's algorithm, significant advances have been made in the area, which we briefly overview next. 

On the side of deterministic algorithms, the best known $(\Delta+1)$-coloring distributed algorithm,
in terms of dependency on $n$, is a $(2^{O(\sqrt{\log n})})$-round algorithm by Panconesi and Srinivasan~\cite{panconesi1992improved}.
In terms of dependency on the maximum degree $\Delta$ of the graph, the linear in $\Delta$ round complexity remained as the state of the art for
deterministic $\Delta+1$-coloring~\cite{barenboim2014distributed}, until very recently, when Barenboim~\cite{Barenboim:2015:D9S} presented an  $O(\Delta^{3/4}\log \Delta+ \log^*n)$-round distributed $(\Delta+1)$-coloring algorithm.
This was followed by a work of 
Fraigniaud, Heinrich, and Kosowski\cite{fraigniaud2016local}, which improved the round complexity to $O(\sqrt{\Delta} \log^{2.5} \Delta+ \log^*n)$ rounds.

On the side of randomized algorithms, an $O(\log n)$-round algorithm follows from Luby’s maximal
independent set (MIS) algorithm~\cite{luby1986simple}. A direct $O(\log n)$-round distributed algorithm was analyzed
by Johansson~\cite{johansson1999simple}. The fastest known randomized algorithm for
$(\Delta+1)$-coloring is due to a recent work of Harris et al.~\cite{harris2016distributed} which
provides a $(\Delta+1)$-coloring in $O(\sqrt{\log \Delta})+ 2^{O(\sqrt{\log \log n})}$ rounds, with high probability.

\paragraph{Shortcomings of These Methods in Obtaining Arboricity-Dependent Coloring} All the aforementioned deterministic and randomized algorithms perform in iterations, where in each iteration the graph
is colored partially and each node that remains uncolored removes from its palette the colors that are taken by its neighbors, until a proper $(\Delta+1)$-coloring of the whole graph is produced.
This fundamental property 
makes these algorithms inappropriate for our setting of obtaining an arboricity-dependent coloring of the graph.
In particular, in a graph $G$ with arboricity $\alpha$ and maximum out-degree $\Delta \gg \alpha$,
the above algorithms may fail to produce an $f(\alpha)$-coloring.
Next, we present the known results on distributed graph coloring in which the number of colors depends on the arboricity of the graph.
\smallskip

\paragraph{Known Distributed Algorithms for Arboricity-Dependent Coloring} Barenboim and Elkin\cite{barenboim2008sublogarithmic, barenboim2010sublogarithmic} present a deterministic distributed
algorithm that computes an $O(\alpha^2)$ coloring within $O(\log n)$ rounds --- which is essentially the time that is proven to be necessary by the above lower bound.
If one uses more colors, say $O(q \cdot \alpha^2)$ colors for some parameter $q\geq 1$, the algorithm can be made somewhat faster, running in $O(\log_{q} n + \log^* n)$ rounds.
They also show that by spending more time, particularly $O(\alpha \log n)$ rounds, one can get close to the ideal number of colors and use $\lfloor (2+\eps)\cdot \alpha +1 \rfloor$ colors,
for any constant $\eps>0$. This can be turned into smoother trade-off, obtaining an $O(t\cdot \alpha)$-coloring, for any $t\in [1, \alpha]$, in $O(\frac{\alpha}{t} \cdot \log n + \alpha \log \alpha)$ rounds.
\smallskip

Kothapoli and Pemmaraju\cite{Kothapalli:2011} study arboricity-dependent randomized distributed coloring algorithms, although targeting a very different range of parameters:
they allow drastically more colors, but then their algorithms run very fast. In particular, they present randomized distributed algorithms for $O(\alpha \cdot n^{1/k})$-coloring in $O(k)$ rounds,
when $k\in [\Omega(\log \log n, \sqrt{\log n})]$; see \cite[Theorem 1.4]{Kothapalli:2011} for the precise statements. They also present more detailed trade-offs in \cite[Theorem 1.3]{Kothapalli:2011},
when a low out-degree orientation of the graph is provided. By the above lower bounds, we know that if we want something remotely close to $2\alpha$ colors, or even $\poly(\alpha)$ colors,
we can allow $\Omega(\log_{\alpha} n)$ rounds for free. To the best of our understanding, the trade-offs of \cite[Theorem 1.4]{Kothapalli:2011} and \cite[Theorem 1.3]{Kothapalli:2011} are
not suitable when $\Omega(\log_{\alpha} n)$ rounds are allowed, with only one exception: for $\alpha \geq 2^{\omega(\log^{1/3} n)}$, one can obtain an $O(\alpha)$-coloring in $O(\log n)$ rounds,
by putting together \cite[Theorem 1.3 (ii)]{Kothapalli:2011} and $H$-partitions of \cite{barenboim2010sublogarithmic}.  

\medskip
\paragraph{Open Problem}
Barenboim and Elkin ask in Open Problem 11.11 of their distributed graph coloring book~\cite{barenboim2013monograph}: ``\emph{Can one use significantly less than $\alpha^2$ colors,
and still stay within deterministic $O(\log n)$ time?}'', immediately followed by adding that ``\emph{This question is open even for randomized algorithms}''.  

\subsection{Our Contribution}
We present very simple randomized distributed algorithms that make a significant progress on the above open problem:

\begin{theorem}\label{sec:Introduction and Related Work:theorem1}
For any constant $\eps>0$, there are randomized distributed algorithms that on any $n$-node graph with arboricity $\alpha$, with high probability\footnote{As standard, we use the phrase \emph{with high probability} (w.h.p.)
to indicate that an event happens with probability at least $1-1/n^c$, for a desirably large constant $c\geq 2$.}, compute 
\begin{itemize} 
\item a $\big(\min\big\{(2+\eps)\alpha + O(\log n\cdot \log\log n), O(\alpha\log \alpha)  \big\}\big)$-coloring in $O(\log n)$ rounds, 
\item an $O(\alpha)$-coloring, in $O(\log n \cdot \min\{\log\log n,\;  \log \alpha\})$ rounds.
\end{itemize}
\end{theorem}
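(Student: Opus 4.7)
The plan is to combine the \emph{H-partition} framework of Barenboim and Elkin (recalled earlier in the paper, and underlying the $O(\alpha^2)$-coloring algorithm) with a carefully tuned randomized layered coloring. First, in $O(\log n)$ rounds one computes a partition of $V$ into $L=O(\log n)$ layers $H_1,\dots,H_L$ such that every $v\in H_i$ has at most $d$ neighbors in $H_i\cup\cdots\cup H_L$; tuning the peeling threshold gives $d=(2+\eps)\alpha$ for the first bound and $d=O(\alpha)$ for the remaining bounds. One then colors vertices from the top layer downward: when $v\in H_i$ fixes its color, all higher-layer neighbors are already colored, so only the at-most-$d$ same-layer neighbors remain as potential conflicts.

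The basic per-layer subroutine is randomized trial coloring with palette size $P$. Each still-uncolored node picks a uniformly random color from the colors not used by its already-colored higher-layer neighbors, and accepts it unless some same-layer neighbor picks the same color. Since at least $P-d$ colors are available at each node, one trial succeeds with probability at least $1-d/(P-d)$, and the \emph{slack} $P-d$ relative to the interfering degree $d$ is the main parameter to tune. For the $O(\alpha\log\alpha)$-color result I would set $P=\Theta(\alpha\log\alpha)$, giving slack that exceeds $d$ by a $\Theta(\log\alpha)$ factor. To keep the total round count at $O(\log n)$ with $L=O(\log n)$ layers, each node would be allowed to propose a batch of $\Theta(\log\alpha)$ independent color candidates per round, driving its per-layer failure probability down to $1/\poly(\alpha)$; the residual set of uncolored vertices would then be handled by a standard shattering-style cleanup in $O(\log n)$ additional rounds.

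For the tighter $(2+\eps)\alpha+O(\log n\log\log n)$-color bound the slack is too small to assign layer by layer, so the $O(\log n\log\log n)$ extra colors should be pooled as a \emph{global} budget, with each node trying $\Theta(\log\log n)$ parallel color candidates per round so that its total failure probability over the $O(1)$ rounds allocated to its layer stays below $1/\poly(n)$. The $O(\alpha)$-color, $O(\log n\cdot\min\{\log\log n,\log\alpha\})$-round bound is then obtained by recoloring. From the $O(\alpha\log\alpha)$-coloring, the $O(\log\alpha)$ ``excess'' color classes can be collapsed into the first $O(\alpha)$ colors one class at a time in $O(\log n)$ rounds each, giving $O(\log n\log\alpha)$ total; from the $(2+\eps)\alpha+O(\log n\log\log n)$-coloring, an analogous cleanup of $O(\log\log n)$ excess ``super-classes'' gives the $O(\log n\log\log n)$ variant. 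Each recoloring stage is itself a layered random-trial subroutine against the target palette.

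The main obstacle is forcing $L=O(\log n)$ layers to fit into $O(\log n)$ total rounds, i.e.\ an amortized $O(1)$ rounds per layer. A naive per-layer random-trial analysis needs $\Omega(\log n)$ rounds per layer for a high-probability guarantee, which would blow the budget to $\Omega(\log^2 n)$. Avoiding this requires both the parallel multi-trial device and the pooling of slack across layers, and the delicate step will be a careful union bound over same-layer neighbors that exploits the approximate independence of the parallel random color choices within a single round, especially in the $O(\log n\log\log n)$-slack regime where the per-round success probability barely meets the threshold required by the tight round budget.
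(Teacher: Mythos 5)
Your high-level framework --- $H$-partition, orienting edges to get out-degree $O(\alpha)$, coloring layers top-down with randomized trials so that only out-neighbors can conflict, and a shattering-style cleanup --- is the right skeleton and matches the paper. The $O(\alpha\log\alpha)$-color plan ($\Theta(\log\alpha)$-factor slack, $\Theta(\log\alpha)$ candidates per round, per-node failure $1/\poly(\alpha)$, then a residual of no long directed paths that can be finished greedily in $O(\log n)$ rounds) is essentially what the paper does in its low-arboricity section. However, your two other bounds rely on steps that do not work as described.

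For the $(2+\eps)\alpha+O(\log n\log\log n)$ bound, the ``pooled $\Theta(\log\log n)$ candidates per round'' device cannot push the per-node failure probability down to $1/\poly(n)$. With palette $P=(2+\eps)\alpha+O(\log n\log\log n)$ and out-degree close to $(2+\eps)\alpha$, the probability that a single random candidate lands in the unblocked slack is only $\Theta\paren{\log n\log\log n / \alpha}$ once $\alpha\gg\log n\log\log n$; raising a $(1-p)$ failure to the $\Theta(\log\log n)$-th power cannot reach $1/\poly(n)$ unless $p=\Omega(\log n/\log\log n)$, which is absurd. Moreover, if every node in a layer proposes $\Theta(\log\log n)$ colors, the same-layer out-neighbors collectively block up to $(2+\eps)\alpha\cdot\Theta(\log\log n)$ colors, exceeding $P$, so the slack argument collapses. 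The paper avoids this entirely: it uses \emph{one} candidate per round and only $O(1)$ rounds per layer, accepts a constant per-node failure probability, and then argues (via Chernoff, using $\alpha=\Omega(\log n)$) that the residual out-degree drops by an $\eps$-factor. It then runs a separate iterative degree-reduction stage over $\log^* n$ phases, each with a geometrically shrinking fresh palette, until the residual arboricity is $O(\log n)$; the $O(\log n\log\log n)$ extra colors are the cost of applying the low-arboricity $O(\alpha\log\alpha)$-coloring to that residual (with $\alpha$ replaced by $O(\log n)$), not a per-round trial budget. This iterative phase-wise degree reduction is a key missing ingredient in your proposal.

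For the $O(\alpha)$ bound, the ``collapse excess color classes'' recoloring also has a genuine gap: once the full graph is colored, a node being recolored must avoid the colors of \emph{all} its neighbors, including lower-layer in-neighbors, and a graph of arboricity $\alpha$ can have nodes of degree far exceeding $O(\alpha)$. So there may be no legal color in the target $O(\alpha)$-palette. The paper never recolors; it instead runs the same layered single-candidate trial directly with palette $(2+\eps)\alpha$, spending $\Theta(\log\alpha)$ rounds per layer to drive the per-node failure probability to $d^{-2}$, and then finishes via the no-long-paths union bound. The $\min\{\log\log n,\log\alpha\}$ factor falls out of choosing between this $O(\log n\log\alpha)$-round algorithm (good when $\alpha$ is small) and the high-arboricity degree-reduction pipeline (good when $\alpha=\Omega(\log n)$).
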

\medskip

This theorem achieves a near-optimal coloring as a function of arboricity, with parameter trade-offs that compare favorably to the previous results provided by
\cite{Kothapalli:2011, barenboim2010sublogarithmic}. In particular, so long as $\alpha = \Omega(\log n \cdot \log \log n)$, we get the almost best-possible $((2+\eps)\cdot \alpha)$-coloring, for $\eps>0$, in $O(\log n)$ time.
For graphs of lower arboricity, we can either spend an $O(\log \alpha) \leq O(\log\log n)$ factor more time and get an $O(\alpha)$-coloring in $O(\log n \cdot \log \log n)$ rounds, or we can use an $O(\log \alpha)$
factor more colors and get a coloring with $O(\alpha \log \alpha) \ll \alpha^2$ colors in $O(\log n)$ time.

\section{Warm Up: Reviewing an Algorithm of Barenboim and Elkin\cite{barenboim2008sublogarithmic, barenboim2010sublogarithmic}}
In this section, we review an $O(\log n)$-round deterministic distributed algorithm by Barenboim and Elkin~\cite{barenboim2010sublogarithmic}
that produces an $O(\alpha^2)$-coloring of any $n$-node graph $G=(V,E)$ with arboricity $\alpha$.
We note that the paper \cite{barenboim2010sublogarithmic} presents other trade-offs when more time is allowed, as overviewed in
\Cref{sec:Intro}, e.g., $((2+\eps)\cdot \alpha)$-coloring in $O(\alpha \log n)$ time, but these algorithms are less relevant for our target of $O(\log n)$-time algorithms
(and also their aforementioned open problem in \cite{barenboim2013monograph}).

The $O(\log n)$-time $O(\alpha^2)$-coloring algorithm of Barenboim and Elkin\cite{barenboim2008sublogarithmic, barenboim2010sublogarithmic} consists of two steps.
In the first step, we use an algorithm, called $H$-partition, to compute an orientation of the edges in $O(\log n)$ rounds,
such that each node has out-degree at most $O(\alpha)$. In the second step, we compute an $O(\alpha^2)$-coloring in $O(\log^* n)$ rounds,
using the low out-degree orientation of step 1. Later in \Cref{sec:Coloring for High-Arboricity Graphs,sec:Coloring for Low-Arboricity Graphs}, we will make use of this $H$-partition method.

\subsection{Step 1: Low Out-Degree Orientation via $H$-partition}\label{sec:Preliminaries:orientation}
We now discuss a deterministic distributed algorithm that, given an $n$-node graph $G=(V,E)$ with arboricity $\alpha$, in $O(\log_{1+\eps/2} n)$ rounds, 
computes an acyclic orientation of the edges such that the maximum out-degree is at most $(2+\eps)\cdot \alpha$, for a given parameter $\eps>0$. 

The main idea behind the algorithm is to partition the nodes into
$\ell=\lceil \log_{\frac{2+\eps}{2}} n \rceil$ disjoint subsets 
$H_1,H_2,...H_{\ell}$,
such that every node $v \in H_j$ with $j \in \{1,2,...,l\}$, has at most $(2+\eps)\cdot \alpha$ neighbors in subsets $\cup_{y=j}^{\ell} H_j$.
We refer to partitions that satisfy this property as $H$-partitions with degree $d\le (2+\eps)\cdot \alpha$ and size $\ell=\lceil \log_{\frac{2+\eps}{2}} n \rceil$.
We refer to subsets
$H_1,H_2,...,H_{\ell}$ as layers of the $H$-partition. In \Cref{sec:Preliminaries:H-partition:algorithm} we sketch the algorithm for
computing an $H$-partition.

Once an $H$-partition is computed, we orient the edges that have endpoints in different layers $H_j$ and $H_j'$, for $j'>j$, 
towards the higher layer $H_j'$, and orient the edges which have endpoints in the same layer towards the greater ID endpoint.
This ensures that we have an acyclic orientation with maximum out-degree
at most $ d \le (2+\eps)\cdot \alpha$. 

\begin{lemma}\label[lemma]{sec:Preliminaries:H-partition:algorithm} 
For a graph $G$ with arboricity $\alpha$ and a parameter $\eps>0$,
there is a deterministic distributed algorithm that computes
an $H$-partition of $G$ with  degree $d \le (2+\eps) \cdot \alpha$ and size $\ell=\lceil \log_{\frac{2+\eps}{2}} n \rceil$
in $O(\log_{\frac{2+\eps}{2}} n )$ rounds. 
\end{lemma}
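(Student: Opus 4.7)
The plan is to iteratively peel off low-degree layers. Let $G_1 = G$ and, having defined $G_j$, form layer $H_j$ as the set of all nodes $v \in V(G_j)$ whose degree in the residual graph $G_j$ is at most $(2+\eps)\cdot\alpha$. Then set $G_{j+1} = G_j \setminus H_j$. The distributed implementation is trivial: in iteration $j$, each still-active node counts how many of its neighbors are still active (a single round of communication), compares against the threshold $(2+\eps)\alpha$, and, if eligible, irrevocably joins $H_j$ and informs its neighbors in the next round. Thus each iteration costs $O(1)$ rounds, and the total round complexity equals the number of iterations needed to exhaust the vertex set.

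The correctness of the degree bound is essentially by definition: any neighbor of $v \in H_j$ that lies in $H_y$ for some $y \ge j$ was, by construction, still present in the residual graph $G_j$ at the moment $v$ was peeled, so $v$'s neighbors in $\bigcup_{y \ge j} H_y$ are a subset of its neighbors in $G_j$, whose cardinality is at most $(2+\eps)\alpha$.

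The main quantitative step is to show that a constant fraction of $V(G_j)$ joins $H_j$, so that the vertex count shrinks geometrically. For this I would use the fact that arboricity is hereditary: $\alpha(G_j) \le \alpha(G) \le \alpha$, hence $|E(G_j)| \le \alpha\paren{|V(G_j)| - 1} < \alpha\,n_j$ and the average degree in $G_j$ is strictly less than $2\alpha$. By a Markov-type counting argument, the number of nodes in $G_j$ with degree exceeding $(2+\eps)\alpha$ is strictly less than $\frac{2\alpha\, n_j}{(2+\eps)\alpha} = \frac{2}{2+\eps}\, n_j$. Consequently $|H_j| > \frac{\eps}{2+\eps}\, n_j$, so $n_{j+1} < \frac{2}{2+\eps}\, n_j$.

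Iterating this bound gives $n_{\ell+1} < n\cdot\paren{\tfrac{2}{2+\eps}}^{\ell}$, which drops below $1$ as soon as $\ell \ge \log_{(2+\eps)/2} n$. Thus after $\ell = \lceil \log_{(2+\eps)/2} n\rceil$ iterations every node has been assigned to some layer $H_j$, and the total round count is $O(\log_{(2+\eps)/2} n)$ as claimed. There is no real obstacle beyond being careful that the arboricity bound transfers to every residual subgraph (it does, by the max-over-subsets definition) and that each iteration only needs $O(1)$ rounds of local communication, both of which are straightforward.
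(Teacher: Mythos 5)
Your proposal is correct and takes essentially the same approach as the paper's proof sketch: iteratively peel off the low-degree nodes, show via the hereditary arboricity bound and a double-counting/Markov argument that each iteration removes at least an $\frac{\eps}{2+\eps}$ fraction of the surviving vertices, and conclude that $\lceil \log_{(2+\eps)/2} n\rceil$ iterations suffice. The paper's proof merely states the key counting fact (``at least $\frac{\eps}{2+\eps}\cdot n$ nodes with degree at most $(2+\eps)\alpha$, by double-counting''); you have filled in that double-counting explicitly, which is a faithful completion rather than a different route.
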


\begin{proof}[Proof Sketch]
A graph with arboricity $\alpha$ has at least $\frac{\eps}{2+\eps}\cdot n$ nodes
with degree at most $(2+\eps)\cdot \alpha$, as can be seen by a simple double-counting of edges. These nodes join layer $H_1$.
In the subgraph $G\setminus H_1$, there are at least
$\frac{\eps}{2+\eps}\cdot (n-|V(H_1)|)$ nodes with degree at most $(2+\eps)\cdot \alpha$.
These nodes join layer $H_2$. Iteratively, in the subgraph $G\setminus \cup_{y=1}^j H_y$
there are  at least
$\frac{\eps}{2+\eps}\cdot (n-\sum_{y=1}^{j}|V(H_y)|)$ nodes with degree at most $(2+\eps)\cdot \alpha$; these nodes
join layer $H_{j+1}$.
This argument continues until all nodes have joined a layer, which happens after at most
$\ell=\lceil \log_{\frac{2+\eps}{2}} n \rceil$ rounds. 
\end{proof}

\subsection{Step 2: Coloring the Graph using the Low Out-Degree Orientation}\label[section]{sec:Preliminaries:Linial}
We now employ the above low out-degree (acyclic) orientation to compute an $O(\alpha^2)$-coloring, in $O(\log^* n)$ additional rounds. The algorithm 
is based on (iterative applications of) a single-round coloring reduction,
similar to Linial's Algorithm~\cite{linial1987LOCAL,linial1992locality}. 

\subsubsection*{Linial's Coloring Algorithm}
Linial's coloring algorithm is an
$O(\log^*n)$-round 
deterministic distributed algorithm that computes an $O(\Delta^2)$-coloring of the input graph,
where $\Delta$ is the largest degree in the graph.
In each round, a $k$-coloring is transformed to a $k'$-coloring, such that
$k'=O(\Delta^2 \log_{\Delta} k )$. This is done by letting each node
compute a set that is not a subset of the union of the sets of its neighbors. 
Then, it picks an arbitrary color from this set
that is not in the union of its neighbors' sets. 
The existence of such a set relies on \Cref{sec:Preliminaries:Linial:cover-free}.
The coloring is produced by iteratively applying the single-round color reduction.
We start with the initial numbering of the vertices as a $n$-coloring.
In a single round, we compute an $O(\Delta^2 \log_{\Delta} n)$-coloring.
With another single-round color reduction, we get an $O (\Delta^2\cdot (\log_{\Delta} \Delta + \log_{\Delta} \log_{\Delta} n))$ coloring. 
After $O(\log^*n)$ iterations, we end up with an $O(\Delta^2)$-coloring. The single-round reduction technique relies on the following lemma.

\begin{lemma}\label[lemma]{sec:Preliminaries:Linial:cover-free}[Linial \cite{linial1987LOCAL,linial1992locality}]
For any $k$ and $\Delta$, there exists a $\Delta$-cover free family of size $k$ on a ground-set of size $k'=O(\Delta^2 \log_\Delta k)$
i.e., a family of sets $S_1,S_2,...,S_k \in \{1,2,...,k'\}$ such that there is no set in the family that is a subset of the union of $\Delta$ other sets.
\end{lemma}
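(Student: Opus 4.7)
The plan is to apply the probabilistic method: construct a random family $S_1, \ldots, S_k \subseteq \{1, \ldots, k'\}$ by placing each ground-set element into each $S_i$ independently with probability $p$, and show that for suitable $p$ and $k'$, with positive probability no $S_i$ lies in the union of $\Delta$ other sets.

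Fix an index $i$ and $\Delta$ distinct indices $j_1, \ldots, j_\Delta$ from $\{1,\ldots,k\}\setminus\{i\}$, and call an element $x \in \{1,\ldots,k'\}$ a \emph{witness} for this tuple if $x \in S_i$ but $x \notin S_{j_\ell}$ for every $\ell$. A single element is a witness with probability $p(1-p)^\Delta$, and since element choices are mutually independent,
$$\Pr\!\left[S_i \subseteq \bigcup_{\ell=1}^{\Delta} S_{j_\ell}\right] = \bigl(1 - p(1-p)^\Delta\bigr)^{k'}.$$
Choosing $p = 1/(\Delta+1)$ maximizes $p(1-p)^\Delta$ and gives $p(1-p)^\Delta \geq c/\Delta$ for some absolute constant $c > 0$, so this probability is at most $\exp(-ck'/\Delta)$.

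Next I would union-bound over the at most $k \binom{k-1}{\Delta} \leq k^{\Delta+1}$ offending tuples $(i, j_1, \ldots, j_\Delta)$, obtaining a total failure probability at most $k^{\Delta+1}\exp(-ck'/\Delta)$. Setting $k'$ to a sufficiently large constant times $\Delta(\Delta+1)\log k$ drives this expression below $1$, witnessing the existence of a $\Delta$-cover free family.

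The main subtle point, which I expect to be the chief obstacle, is matching the tighter $O(\Delta^2 \log_\Delta k)$ ground-set size stated in the lemma rather than the $O(\Delta^2 \log k)$ that falls out of the crude union bound. This extra $\log \Delta$ factor can be shaved either by a more careful accounting in the union bound or, more transparently, by replacing the random construction with an explicit algebraic one: index the sets by polynomials of degree at most $d$ over a finite field $\mathbb{F}_q$ with $q = \Theta(\Delta d)$ and set $S_f = \{(x, f(x)) : x \in \mathbb{F}_q\}$ inside the ground set $\mathbb{F}_q \times \mathbb{F}_q$. Because any two such polynomials agree in at most $d$ points, $\Delta$ of these sets intersect any fixed $S_f$ in at most $\Delta d < q = |S_f|$ points, so no cover is possible; tuning $d = \Theta(\log_\Delta k)$ indexes enough sets to match the claimed bound.
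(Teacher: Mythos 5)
The paper itself does not prove this lemma; it simply attributes it to Linial and moves on. So there is no in-paper argument to compare against, and your proposal has to be judged on its own merits.

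Your first route (the independent-$p$ random family with $p = 1/(\Delta+1)$, witness elements, and a union bound over $k^{\Delta+1}$ tuples) is correct, and you correctly identify that it only yields a ground set of size $O(\Delta^2 \log k)$, which is weaker than the stated $O(\Delta^2 \log_\Delta k) = O(\Delta^2 \log k / \log\Delta)$. This discrepancy is not cosmetic: with the weaker $O(\Delta^2\log k)$ reduction, iterating Linial's step from an $n$-coloring bottoms out at $O(\Delta^2\log\Delta)$ colors, not the $O(\Delta^2)$ the paper claims, so the $\log_\Delta$ really is load-bearing.

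The genuine gap is in your proposed fix. The Reed--Solomon family $S_f = \{(x,f(x)) : x\in\mathbb{F}_q\}$ over polynomials of degree at most $d$ with $q=\Theta(\Delta d)$ is the classical Kautz--Singleton construction, and its ground set $\mathbb{F}_q\times\mathbb{F}_q$ has size $q^2 = \Theta(\Delta^2 d^2)$. To index $k$ sets you need $q^{d+1}\geq k$, which forces $d = \Theta(\log_\Delta k)$ (as you say), and then the ground set is $\Theta\big(\Delta^2(\log_\Delta k)^2\big)$ --- off from the target by a full factor of $\log_\Delta k$. Your closing sentence, ``tuning $d=\Theta(\log_\Delta k)$ indexes enough sets to match the claimed bound,'' checks the number of sets but silently drops the $d^2$ in the ground-set size; the construction does not match the claimed bound except when $k = \poly(\Delta)$. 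The tight $\Theta(\Delta^2\log_\Delta k)$ bound is a theorem of Erd\H{o}s, Frankl, and F\"uredi (1985) and requires a more refined probabilistic argument than the crude union bound (roughly: random sets of a well-chosen fixed size, with a two-stage concentration argument that exploits that most $\Delta$-unions leave an $\Omega(1/\Delta)$-fraction of the ground set uncovered). Either cite that result directly, as the paper implicitly does, or carry out that sharper argument; neither of your two routes, as written, closes the $\log_\Delta k$ (respectively $\log\Delta$) gap.
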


\subsubsection*{Applying Linial's Algorithm to Low Out-Degree Graphs}
Now, the second step of the $O(\alpha^2)$-coloring algorithm of Barenboim and Elkin\cite{barenboim2008sublogarithmic, barenboim2010sublogarithmic}
is running a variation of Linial's algorithm where each node considers only the colors of its out-neighbors.
In particular, each node computes a set that is not a subset of the union of the sets of its out-neighbors.
Then, it picks an arbitrary color from this set that is not in the union of its out-neighbors' sets. This produces a proper coloring of the graph.
Similar to Linial's algorithm, after $O(\log^*n)$ rounds, the number of colors is $O(\alpha^2)$.

\section{Coloring for High-Arboricity Graphs}\label{sec:Coloring for High-Arboricity Graphs}
In this section, we present an $O(\log n)$-round randomized distributed algorithm that, with high probability
computes, a $((2+\varepsilon)\cdot \alpha+O(\log n \cdot \log\log n))$-coloring of a graph $G$ with arboricity $a=\Omega(\log n)$, for any desirably small constant $0 < \varepsilon \le 1$.

\paragraph{Algorithm Outline}
Our algorithm consists of two steps. 

\begin{itemize} 
\item In the first step, we perform an $O(\log n)$-round partial coloring that uses $(2+\frac{2}{3} \varepsilon)\cdot \alpha$ colors, in a manner that
the remaining graph--- i.e., the graph induced by the nodes that remain uncolored --- has arboricity at most $\frac{\varepsilon}{144}\alpha$, with high probability.

\item In the second step, we partially color the remaining graph of arboricity at most $\frac{\varepsilon}{144}\alpha$, in $O(\log n)$ rounds, using at most $\frac{\varepsilon}{3}\alpha$ new colors.
This is done such that at the end of the second step, the subgraph induced by the uncolored nodes has arboricity at most $O(\log n)$, with high probability. 
\end{itemize}
Overall, our algorithm runs in $O(\log n)$ rounds and uses $(2+\varepsilon)\cdot \alpha$ colors.
Once we are done with this 2-step partial coloring,
on the remaining graph, we apply the coloring algorithm of \Cref{lem:coloringforLowArb}, which we present later in \Cref{sec:Coloring for Low-Arboricity Graphs}.
This algorithm uses $O(\log n\cdot \log\log n)$ new colors to color the remaining uncolored nodes, in $O(\log n)$ rounds. Hence, overall,
we obtain a proper $((2+\varepsilon)\cdot \alpha+O(\log n \cdot \log\log n))$-coloring of the whole graph, in $O(\log n)$ rounds, with high probability.
If we omit the first step and apply directly the second step of the algorithm, an $O(\alpha)$ partial coloring is produced in $O(\log n)$ rounds.
Overall, this would produce a proper $O(\alpha)$-coloring of the the whole graph, in $O(\log n)$ rounds, with high probability.

We note that if the input graph $G$ has arboricity $\alpha \geq \log^2 n$, once we reach a remaining graph of arboricity $O(\log n)$,
we can wrap up using a much simpler algorithm: we can color the remaining graph by applying the variation of Linial's algorithm explained in
\Cref{sec:Preliminaries:Linial}, which uses $O(\log^2 n)$ extra colors and colors all the remaining nodes in $O(\log^*n)$ extra rounds. Hence, in total,
we would end up with a $((2+\eps)\cdot \alpha + O(\log^2 n))$-coloring in $O(\log n)$ rounds.

\subsection{Step 1: A First Partial Coloring of the Graph}
Let $G=(V,E)$ be a graph with arboricity $\alpha=\Omega(\log n)$. In this section,
we present an $O(\log n)$-round randomized distributed algorithm that partially colors $G$,
using $(2+\frac{2\varepsilon}{3})\cdot \alpha$ colors, for a small constant $0 < \varepsilon \le 1$,
such that the remaining graph i.e., the graph induced by the remaining uncolored nodes, has 
arboricity at most $\frac{\varepsilon}{144}\alpha$. Next, for simplifying the notation, we use $\epsilon=\frac{\varepsilon}{3}$.

A first preparation step of the algorithm is to compute in $O(\log n)$ rounds an $H$-partition with degree $d \le (2+\epsilon)\cdot \alpha$ and size $\ell=\lceil \frac{1}{\epsilon}\log n\rceil$,
together with an acyclic orientation of the edges, such that the maximum out-degree is at most $d \le (2+\epsilon)\cdot \alpha$.
Then, it partially colors layers $H_1, H_2,...,H_{\ell}$ gradually, starting from layer $H_{\ell}$ and proceeds backwards, ending with the first layer $H_1$. 
Each node receives a palette of size $(2+2\epsilon)\cdot \alpha$ and when we color layer $H_j$, $1 \le j \le \ell$, each (uncolored) node $v \in H_j$ performs the following algorithm.

\begin{framed}\label{sec:Coloring for High-Arboricity Graphs:step1:randomized algorithm}
{\setlength{\parindent}{0cm}
\textbf{First Random Partial Coloring Algorithm, run by each node $v \in H_j$: \\}
In iteration $i\in \{1,2,...,\lceil \frac{1+\epsilon}{\epsilon}\rceil \cdot \log \frac{300}{\epsilon} \}$,
\begin{itemize}
 \item Node $v$ selects one random color $x$ among colors $\{1,2,...,(2+2\epsilon)\cdot \alpha\}$.
 \item Node $v$ sends the selected color $x$ to its neighbors, and receives their selected colors.
 \item If no out-neighbor has selected $x$ in this round, or picked $x$ as its permanent color in the previous rounds, node $v$ gets colored permanently with $x$, and informs its neighbors.
\end{itemize}}
\end{framed}
\begin{lemma}
After partially coloring the graph in $O(\log n)$ rounds, the remaining graph i.e., the graph induced by the uncolored nodes, has 
out-degree at most $\frac{\epsilon}{112}d$, with high probability.
\end{lemma}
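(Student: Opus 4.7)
The plan is to (i) lower-bound the per-iteration probability that an uncolored node is successfully colored, (ii) combine this over the $T$ iterations of the inner loop to get a small per-node probability of remaining uncolored, and (iii) apply a concentration argument to bound the uncolored out-degree at each vertex.

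For any uncolored $w \in H_j$ during a given iteration, the set of colors forbidden to $w$ consists of the permanent colors of its already-colored out-neighbors together with the current-iteration random choices of its out-neighbors that are still uncolored and processing within $H_j$. Each out-neighbor contributes at most one forbidden color, so the forbidden set has size at most the out-degree of $w$, which is bounded by $d \le (2+\epsilon)\alpha$. Since $w$'s random choice is uniform on a palette of size $(2+2\epsilon)\alpha$ and independent of the forbidden set,
\[
\Pr[w \text{ is colored in this iteration}] \;\ge\; 1 - \frac{d}{(2+2\epsilon)\alpha} \;\ge\; \frac{\epsilon}{2+2\epsilon}.
\]
Iterating this bound over the $T = \lceil(1+\epsilon)/\epsilon\rceil \cdot \log(300/\epsilon)$ iterations of $H_j$'s inner loop (each using fresh, independent randomness), the probability that $w$ is still uncolored at the end of processing $H_j$ is at most $(1-\epsilon/(2+2\epsilon))^T \le e^{-\epsilon T/(2+2\epsilon)}$; by the specific choice of $T$ this evaluates to a small quantity which I denote $q$ and which the algorithm's parameters are tuned to make at most $\epsilon/112$.

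Next, I fix an arbitrary vertex $v$ and let $X_v$ count the number of out-neighbors of $v$ that remain uncolored after the algorithm completes. Applying the above bound to each of the at most $d$ out-neighbors of $v$ and using linearity of expectation, $\mathbb{E}[X_v] \le d \cdot q \le \epsilon d/112$. Since the lemma assumes $\alpha = \Omega(\log n)$, we have $d = \Theta(\alpha) = \Omega(\log n)$, and a Chernoff-type concentration bound should give $\Pr[X_v \ge (1+o(1))\cdot \epsilon d/112] \le n^{-(c+1)}$ for a constant $c$ of our choice. A union bound over the $n$ vertices then delivers the claimed high-probability bound on the maximum out-degree of the remaining graph.

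The main obstacle is the concentration step. The indicator events ``out-neighbor $u$ of $v$ is uncolored'' are not mutually independent across the different out-neighbors $u$, since the coloring process couples the outcomes of nearby vertices via shared deeper out-neighbors and via iteration-level randomness. To handle this I would view $X_v$ as a function of the per-iteration random color choices made by vertices within a few hops of $v$ and apply a bounded-differences concentration inequality such as McDiarmid's, observing that each individual color choice changes $X_v$ by only $O(1)$. An alternative is to establish a form of negative correlation among these indicators---intuitively, an out-neighbor $u_1$ of $v$ being colored only helps its own out-neighbors, and hence $u_2$, to be colored---and then apply the Chernoff bound for negatively correlated binary variables. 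A secondary subtlety to keep track of in the per-iteration bound is that out-neighbors lying in higher layers which failed to be colored during their own phase remain permanently uncolored and contribute nothing to the forbidden set, so the bound on $|\text{forbidden set}|$ is in fact against the number of \emph{active} plus \emph{permanently colored} out-neighbors rather than the full out-degree, which only helps.
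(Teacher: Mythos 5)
Your overall structure matches the paper's: (i) bound the per-iteration success probability from below by $\epsilon/(2+2\epsilon)$, (ii) compound this over the $T$ iterations devoted to a layer to bound the per-node failure probability, and (iii) concentrate the uncolored out-degree of each node via a Chernoff-type bound under the assumption $\alpha = \Omega(\log n)$. Your per-iteration calculation and the observation that uncolored out-neighbors in higher layers block nothing are both correct, as is the overall plan.

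Two points to tighten. On the concentration step, you rightly flag that the indicators ``out-neighbor $u$ of $v$ is uncolored'' are not independent, but the McDiarmid route you propose first is not on solid ground: flipping a single color choice of some node $w$ can change whether $w$ ends up permanently colored, and this can propagate to every in-ancestor of $w$; in the worst case many of $v$'s out-neighbors share $w$ as a common out-descendant and change status simultaneously, so $X_v$ is not a bounded-difference function with Lipschitz constant $O(1)$ in the individual color choices. The route that does work is the one you mention second, and it is what the paper is invoking when it writes ``independently of the events of other nodes being colored'': for \emph{every} fixing of the randomness of all nodes other than $u$, in each iteration the set of colors blocked for $u$ has size at most $d \le (2+\epsilon)\alpha$, so $u$'s fresh uniform draw from a palette of size $(2+2\epsilon)\alpha$ escapes with probability at least $\epsilon/(2+2\epsilon)$; crucially, $u$'s own randomness never feeds back into the status of any node $u$ looks at (the orientation is acyclic and nodes only inspect out-neighbors). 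Iterating this conditional bound, peeling off the out-neighbors of $v$ one at a time in topological order, gives $\Pr[\text{all of } S \text{ remain uncolored}] \le q^{|S|}$ for every $S \subseteq N^+(v)$, which bounds the moment generating function of $X_v$ by that of a binomial and lets the standard Chernoff argument go through unchanged. Finally, a parameter nit: the paper tunes $T$ so that the per-node failure probability is $q \le \epsilon/300$, not $\epsilon/112$; the roughly factor-$2.7$ gap between $\E[X_v] \le \epsilon d / 300$ and the target threshold $\epsilon d/112$ is exactly the multiplicative slack that gives Chernoff a polynomially small failure probability already when $d = \Theta(\alpha) = \Omega(\log n)$, whereas taking $q \le \epsilon/112$ and appealing to a $(1+o(1))$-Chernoff bound would require $d = \omega(\log n)$.
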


\begin{proof}
First, we discuss the time complexity of the algorithm. We have $\lceil \frac{1+\epsilon}{\epsilon}\rceil \cdot \log \frac{300}{\epsilon} $ iterations per layer of the
$H$-partition and the $H$-partition has $ \ell=\lceil \frac{1}{\epsilon}\log n\rceil$ total layers. Hence, the whole algorithm
has round complexity $O(\log n)$.

We now argue that once the algorithm is completed, with high probability, the remaining graph has arboricity at most $\frac{\epsilon}{112}d$.
Consider an arbitrary layer $H_j$, $1 \le j \le \ell$ of the $H$-partition.
A node $v \in H_j$ has at most $d \le (2+\epsilon)\cdot \alpha$ neighbors in the graph induced by layers $\cup_{y=j}^{\ell} H_y$.
In each iteration $i$, each permanently colored out-neighbor of $v$, blocks at most one color from $v$'s palette.
Each out-neighbor that is in the same
layer $H_j$ and remains uncolored in iteration $i$,
blocks at most 1 color from $v$'s palette in iteration $i$. This implies that in any iteration $i$,
$v$ has at least $\epsilon \cdot \alpha$ colors that are not blocked by its out-neighbors.
Therefore, the probability that $v$ gets 
permanently colored with a color $x$ in iteration $i$ is at least $\frac{\epsilon \cdot \alpha}{(2+2\epsilon)\cdot \alpha}$.
Moreover, this holds independently of the events of other nodes being colored.

In total, after $\lceil \frac{1+\epsilon}{\epsilon}\rceil \cdot \log \frac{300}{\epsilon}$ iterations we get that, independently of the events of other nodes being colored,
$$Pr[\text{v is not colored}] \le (1-\frac{\epsilon \cdot \alpha}{(2+2\epsilon)\cdot \alpha})^{\lceil \frac{1+\epsilon}{\epsilon}\rceil \cdot \log \frac{300}{\epsilon}}
\le (\frac{1}{4})^{\frac{1}{2} \log \frac{300}{\epsilon}} \le \frac{\epsilon}{300}.$$

After applying the partial coloring in layers $H_1, H_2,..,H_{\ell}$, each node remains uncolored with probability 
at most $\frac{\epsilon}{300}$. 

At this point, the coloring process of the algorithm is completed. We now upper bound the arboricity of the remaining graph i.e., the
graph induced by the uncolored nodes after applying the algorithm. 
Consider a node $v$ that remains uncolored and 
let $X$ be a random variable that represents the number
of $v$'s uncolored out-neighbors. Then, $$E[X]\le d \cdot \frac{\epsilon}{300}.$$
So long as the expected out-degree is $\Omega(\log n)$, we can apply the Chernoff bound and conclude that
$$Pr[X \ge d \cdot \frac{\epsilon}{112}]\leq \frac{1}{n^{10}}.$$
Hence, the remaining graph is an $H$-partition with degree $d \le\frac{\epsilon}{112}(2+\epsilon)\alpha\le \frac{\eps}{336}(2+\frac{\eps}{3})\alpha \le \frac{\eps}{144}\alpha$ and size $\ell=\lceil \frac{3}{\eps}\log n\rceil$
and is oriented such that the out-degree of each remaining node is at most $d \le \frac{\eps}{144}\alpha$, with high probability.
\end{proof}

\subsection{Step 2: A Second Partial Coloring of the Remaining Graph}
Once the first step of the algorithm is completed, the remaining graph is
an $H$-partition with degree $d \le \frac{\eps}{144}\alpha$ and size $\ell=\lceil \frac{3}{\epsilon}\log n\rceil$ and is oriented such that the
out-degree is at most $d \le \frac{\eps}{144}\alpha$ , with high probability.

In this section, we present an $O(\log n)$ randomized distributed algorithm that partially color this remaining graph using $48d \le \frac{\eps}{3}\alpha$ colors,
in a manner that once the algorithm is completed, the graph induced by the remaining uncolored nodes has arboricity at most $O(\log n)$, 
with high probability.

\begin{lemma}
Given an $H$-partition with degree $d=\Omega(\log n)$ and
size $O(\log n)$,
there is an $O(\log n)$ randomized distributed algorithm that partially colors the graph 
using $48d$ colors,
in a manner that the remaining graph has arboricity at most $O(\log n)$, 
with high probability.
\end{lemma}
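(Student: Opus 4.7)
The plan is to run, for $T = \Theta(\log(d/\log n)) = O(\log n)$ rounds (using $d \leq n$), the following natural algorithm on the given $H$-partition orientation (whose out-degree is at most $d$): in each round, every still-uncolored node $v$ picks a color $x$ uniformly at random from $\{1, \ldots, 48d\}$, sends it to its neighbors, and adopts it permanently iff no out-neighbor of $v$ is currently using $x$ --- either permanently from an earlier round, or tentatively this round. Crucially, unlike Step~1 of this section, I would \emph{not} process the layers sequentially; all $T$ rounds act simultaneously on the whole remaining graph, exploiting the much larger palette of size $48d$ compared to the out-degree $d$.

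First, I would prove a strong per-round success bound. For any uncolored $v$, at most $d$ colors are blocked by permanently-colored out-neighbors, so $v$'s uniformly random choice is unblocked with probability at least $(48d-d)/(48d) = 47/48$. Conditional on $v$'s choice being unblocked, each of the (at most $d$) uncolored out-neighbors of $v$ independently picks the same color with probability $1/(48d)$, so by Bernoulli's inequality none of them does with probability at least $(1 - 1/(48d))^d \geq 47/48$. Therefore $\Pr[v \text{ colors in one round}] \geq (47/48)^2 > 23/24$, regardless of history, and chaining this across rounds gives $\Pr[v \text{ uncolored after } T \text{ rounds}] \leq (1/24)^T$. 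With our choice of $T$, this probability is at most $O(\log n / d)$, so the expected number of uncolored out-neighbors of any fixed $v$ is at most $d \cdot O(\log n / d) = O(\log n)$; a high-probability version of this bound for every $v$ would then imply arboricity $O(\log n)$ via the restricted $H$-partition orientation.

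The main obstacle will be this last concentration step. The events ``$u$ is uncolored after $T$ rounds'' for different out-neighbors $u$ of a fixed $v$ are not mutually independent, since two such $u$'s may share further downstream out-neighbors whose random choices affect both. I would handle this by tracking, round by round, the variable $X_r^v$ equal to the number of uncolored out-neighbors of $v$ after round $r$. The per-round bound gives $\E[X_{r+1}^v \mid \text{history through round } r] \leq X_r^v / 24$, and while $X_r^v = \Omega(\log n)$, a Chernoff-type concentration over the round-$(r+1)$ color choices (handled either by a martingale / bounded-differences argument or by exploiting negative-association properties of the coloring events within a single round, absorbing the limited second-order dependence) yields $X_{r+1}^v \leq X_r^v / 12$ with probability $1 - 1/\poly(n)$. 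Iterating this geometric reduction for $T$ rounds and taking a union bound over the $n$ nodes shows that the uncolored subgraph has maximum out-degree, and hence arboricity, $O(\log n)$ with high probability.
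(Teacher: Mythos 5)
Your proposal takes a genuinely different route from the paper. The paper's proof runs $\log^* n$ \emph{phases}; in phase $i$ it maintains an $H$-partition of the current uncolored subgraph of degree roughly $d/{}^{i}2$ (tetration), colors the layers \emph{sequentially} from $H_\ell$ down to $H_1$ with each node picking a growing number $f(i)=Q_i/(2d_i)$ of colors per round, and after each phase re-computes a fresh $H$-partition with half the number of layers. The crucial mechanism is that $f(i)$ grows so fast that the per-phase failure probability $2^{-f(i)}$ drives the out-degree down at tetrational speed, reaching $O(\log n)$ after only $\log^* n$ phases while the total rounds $\sum_i O(\log n/2^i)$ and colors $\sum_i 2Q_i\le 48d$ telescope. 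Your scheme drops all of this machinery: one flat $O(\log(d/\log n))$-round loop, one random color per node per round, no layer order, no $H$-partition recomputation, with a fixed palette of $48d$. This achieves a geometric (rather than tetrational) decrease of out-degree per round, which still reaches $O(\log n)$ in $O(\log n)$ rounds. If the analysis closes, your approach is arguably the cleaner of the two, since it avoids the nested phase/iteration structure and the careful parameter bookkeeping of \Cref{sec:Coloring for High-Arboricity Graphs:lemma:one phase} and \Cref{sec:Coloring for High-Arboricity Graphs:lemma:h-partition}.

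The genuine gap is the concentration step, which you flag but do not resolve, and two of the three options you list would not go through as stated. Bounded differences on $X_{r+1}^v$ as a function of the round-$(r+1)$ color choices fails because a single node $z$ can be an out-neighbor of arbitrarily many of $v$'s out-neighbors, so a single coordinate change can move $X_{r+1}^v$ by a lot. Negative association is also not obviously available here. The option that does work is the one you call ``martingale,'' but it should be spelled out as a \emph{stochastic-domination argument with a topological revelation order}: order the out-neighbors $u_1,\dots,u_m$ of a fixed $w$ in reverse topological order of the acyclic orientation, so that the outcome $Y_{u_j}$ (``$u_j$ still uncolored after $T$ rounds'') for $j<k$ is a function only of random bits of nodes that are \emph{not} $u_k$; then, conditioning on all randomness except $u_k$'s own color choices fixes a ``bad set'' $B_r$ of size at most $d$ in every round, so $\Pr[Y_{u_k}=1\mid Y_{u_1},\dots,Y_{u_{k-1}}]\le (1/48)^T$ for every realization, $\sum_u Y_u$ is dominated by a $\operatorname{Binomial}\bigl(d,(1/48)^T\bigr)$, and Chernoff plus a union bound over $w$ finishes. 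Note that this is exactly the subtlety the paper glosses over with the phrase ``this holds independently of the events of other nodes being colored'' before invoking Chernoff; both proofs need the same domination argument, you should just make it explicit rather than gesture at bounded differences. Finally, do state explicitly that once $X_r^w$ drops below the Chernoff threshold $\Theta(\log n)$ the quantity is monotone non-increasing, so the inductive multiplicative decrease only needs to be applied while $X_r^w=\Omega(\log n)$, which keeps the final bound at $O(\log n)$.
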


\begin{proof}
The algorithm consists of $\log^* n$ phases. In each phase $i$, for $i\in \{0,1,...,\log^* n\}$, we perform a partial coloring
of the remaining graph as follows. The input of phase $i$ is an $H$-partition of the remaining graph
with degree $d_i \le \frac{d}{{^{i}2}}$
and size $O(\frac{\log n}{2^i})$. 
Here, the tetration ${^{y}x}$ expresses $x^{x^{.^{.^{x} } } }$, with $y$ copies of $x$.
In each phase $i$, we apply the $O(\frac{\log n}{2^i})$-round randomized distributed algorithm of~\Cref{sec:Coloring for High-Arboricity Graphs:lemma:one phase},
which we discuss later on this section. In phase $i$, we use $2Q_i=2\cdot  \frac{12d}{2^i}$ colors and we partially color the graph such that
at the end of phase $i$, the remaining graph is an $H$-partition
with degree $d_{i+1} \le \frac{d}{{^{(i+1)}2}}$ and size $O(\frac{\log n}{2^{i+1}})$, with high probability. This is the input for the next phase.

After $\log^* n$ phases, the remaining nodes have out-degree at most $O(\log n)$, with high probability. Furthermore, the total number of
rounds of the process is $\sum_{i=0}^{\log^* n} \frac{O(\log n)}{2^i}=O(\log n)$ and the total number of colors that it uses is $\sum_{i=0}^{\log^* n} 2Q_i\le 48d$.
\end{proof}

\paragraph{The Coloring Algorithm for a Single Phase}
For each phase $i$, we start with an $H$-partition of the remaining graph with degree $d_i \le \frac{d}{{^{i}2}}$ and size $O(\frac{\log n}{2^i})$.
In the coloring part of this phase, we color some nodes in a manner that, among the nodes that remain uncolored,
each node has out-degree at most $\frac{d}{{^{(i+2)}1.98\cdot 20}} \ll \frac{d}{{^{i}2}}$.

The coloring process in phase $i$ consists of two iterations, as follows: In each iteration, each remaining node receives a fresh palette of $Q_i=\frac{12d}{2^i}$ colors.
We color the layers $H_1, H_2,..,H_{\ell}$ of the given $H$-partition gradually, starting from the last layer $H_{\ell}$, and proceed backwards, ending with the first layer $H_1$.
As we show next, one iteration is not enough to drop the maximum out-degree to the desired level. Repeating the algorithm for a second iteration,
we end up with maximum out-degree at most $\frac{d}{{^{(i+2)}1.98\cdot 20}} \ll \frac{d}{{^{i}2}}$, with high probability.
We now focus on coloring an arbitrary layer $H_j$, $1 \le j \le \ell$.
Each node $v$ in layer $H_j$ performs the following algorithm.

\begin{framed}\label{sec:Coloring for High-Arboricity Graphs:algorithm}
{\setlength{\parindent}{0cm}
\textbf{Single-Iteration of Second Partial Coloring Algorithm, run by each node $v\in H_j$ }
\begin{itemize}
 \item Node $v$ selects $f(i)=\frac{Q_i}{2d_i}$ colors at random from a new palette of $Q_i$ colors.
 \item Node $v$ sends the selected colors to its neighbors, and receives their selected colors.
 \item If there is a selected color $x$ such that no out-neighbor has selected $x$ in this round, or picked $x$ as its permanent color
in the previous rounds, node $v$ gets colored permanently with $x$, and informs its
neighbors.
\end{itemize}}
\end{framed}

\begin{lemma}\label[lemma]{sec:Coloring for High-Arboricity Graphs:lemma:one phase}
 Given an $H$-partition with degree $d_i \le \frac{d}{{^{i}2}}$ and size $O(\frac{\log n}{ 2^i})$,
 there is an $O(\frac{\log n}{ 2^i})$-round randomized distributed algorithm that partially colors the graph with
 $2 Q_i=2\cdot \frac{12d}{2^i}$ colors, such that in the same $H$-partition, with size  $O(\frac{\log n}{2^{i}})$, the remaining graph has out-degree
 at most $\frac{d}{{^{(i+2)}1.98\cdot 20}}$, with high probability.
\end{lemma}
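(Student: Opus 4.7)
The plan is to split the analysis into three pieces: time complexity, a single-iteration coloring guarantee, and a two-iteration amplification via Chernoff concentration. The round complexity is immediate: each iteration sequentially processes the $\ell = O(\log n / 2^i)$ layers $H_\ell, H_{\ell-1}, \ldots, H_1$, each in $O(1)$ communication rounds (select, exchange, finalize), and exactly two iterations are performed, so the total round complexity is $O(\log n / 2^i)$.

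The heart of the analysis is a deterministic counting argument for a single iteration. Fix a layer $H_j$ and a currently uncolored $v \in H_j$. By the orientation, $v$'s at most $d_i$ out-neighbors lie either in $H_j$ (higher ID, simultaneously selecting $f(i)$ colors from the current palette) or in some $H_{j'}$ with $j' > j$ (already processed in this iteration, carrying at most one permanent color from the current palette). Hence each out-neighbor blocks at most $f(i)$ colors of $v$'s palette of size $Q_i$, and the blocked set $B(v)$ satisfies
\begin{equation*}
|B(v)| \;\le\; d_i \cdot f(i) \;=\; d_i \cdot \frac{Q_i}{2 d_i} \;=\; \frac{Q_i}{2}
\end{equation*}
deterministically. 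Since $v$'s $f(i)$ colors are chosen uniformly without replacement, independently of the out-neighbors' choices, the probability that every selected color lies in $B(v)$ is at most $\binom{Q_i/2}{f(i)}/\binom{Q_i}{f(i)} \le 2^{-f(i)}$.

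To amplify this to the claimed bound, I would exploit the two iterations together with concentration. After iteration 1, any fixed node $v$ has uncolored out-degree with expectation at most $d_i \cdot 2^{-f(i)}$; using the hypothesis $d = \Omega(\log n)$, a Chernoff-type bound and a union bound over the $n$ vertices yield that, with high probability, every node's effective out-degree entering iteration 2 is $d_i' \le O\bigl(d_i \cdot 2^{-f(i)} + \log n\bigr)$. Iteration 2 then runs the single-iteration subroutine with a \emph{fresh} palette of $Q_i$ colors, independent of iteration 1. Re-applying the counting bound with $d_i'$ in place of $d_i$ gives $\Pr[v \text{ uncolored in iter.\ } 2 \mid d_i'] \le \bigl(d_i'/(2d_i)\bigr)^{f(i)}$, which is far sharper than the iteration-1 bound because $d_i' \ll d_i$. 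One final Chernoff-type concentration together with a union bound then pins down the worst-case uncolored out-degree simultaneously for all nodes.

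The main obstacle is the final numerical verification. After substituting $f(i) \ge 6 \cdot {^{i}2}/2^i$ and $d_i \le d/{^{i}2}$, one has to check that $d_i \cdot 2^{-f(i)} \cdot \bigl(d_i'/(2d_i)\bigr)^{f(i)}$, padded by the multiplicative Chernoff slack, is at most $d/(20 \cdot {^{(i+2)}1.98})$ for every phase index $i$. The base $1.98 < 2$ and the factor $20$ are chosen precisely to absorb all lower-order constants coming from the two-iteration analysis and the concentration losses, so that the recursion between phases closes. This tetration arithmetic is the only delicate step; everything else is a clean composition of the deterministic counting bound and Chernoff concentration.
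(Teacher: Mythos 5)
Your proposal follows essentially the same two-iteration structure as the paper's proof: a deterministic counting argument shows $|B(v)| \le Q_i/2$, giving $\Pr[v\text{ uncolored}] \le 2^{-f(i)}$ per iteration, followed by Chernoff plus a union bound to control the out-degree after each iteration. The one place you deviate is in iteration 2, where you sharpen the failure probability to $\bigl(d_i'/(2d_i)\bigr)^{f(i)}$ by accounting for the reduced out-degree $d_i'$; the paper simply reuses the weaker $2^{-f(i)}$ bound, which already suffices to close the recursion, so your refinement is correct but not needed.
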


\begin{proof}
First, we discuss the time complexity of the algorithm. We have two iterations, and each iteration takes $\ell=O(\frac{\log n}{2^i})$ rounds, one round per layer of the $H$-partition.
Hence, the whole algorithm of this phase has round complexity $\ell=O(\frac{\log n}{2^i})$.

We now argue that at the end of the phase, with high probability, in the remaining graph induced by the uncolored nodes each node has out-degree at most $\frac{d}{{^{(i+2)}1.98\cdot 20}}$.
We do the analysis of the two iterations separately, though they are similar.

Consider the first iteration of phase $i$ and an arbitrary layer $H_j$, $1 \le j \le \ell$. A node $v \in H_j$ has at most $d_i$ out-neighbors in the graph induced by
layers $\cup_{y=j}^{\ell} H_y$. Each permanently colored out-neighbor of $v$ blocks at most one color from $v$'s palette. Each out-neighbor that belongs to the same
layer $H_j$, blocks at most $f(i)$ colors from $v$'s palette.

Thus, there are at most $f(i)\cdot d_i$ colors that are blocked by $v$'s out-neighbors, which implies that $v$
has at least $Q_i-f(i)\cdot d_i=\frac{Q_i}{2}$ colors that are not blocked, when we select random colors for $v$. Therefore, the probability that $v$ gets 
permanently colored with a color $x$ that it selects is at least 1/2. Moreover, this holds independently of the events of other nodes being colored.
In total, since $v$ selects $f(i)=\frac{Q_i}{2d_i}$ colors independently, we get that independently of the events of other nodes being colored:
$$Pr[\text{v is not colored}] \le 2^{-f(i)}=2^{-\frac{{^{i}2\cdot 6}}{ 2^i}}.$$

After applying the 1-round coloring in layers $H_1, H_2,..,H_{\ell}$, each node remains uncolored with probability 
at most $2^{-f(i)}$. 

At this point, the coloring process of the first iteration is completed. We now upper bound the maximum out-degree of the remaining graph.
Consider a node $v$ that remains uncolored and 
let $X$ be a random variable that represents the number
of $v$'s uncolored out-neighbors. Then, $$E[X]\le d_i \cdot 2^{-f(i)}\le \frac{d}{{^{i}2}} \cdot 2^{-\frac{{^{i}2\cdot 6}}{2^i}}.$$
As long as the new expected out-degree is $\Omega(\log n)$, we can apply the Chernoff bound and conclude that
$$Pr[X \ge \frac{d}{{^{(i+1)}1.99 \cdot 20}}]\leq Pr[X \ge 3 \frac{d}{{^{i}2 \cdot 64}}{2}^{-\frac{{^{i}2}}{2^i}}] \le \frac{1}{n^{10}}.$$

We now discuss the decrease in the out-degrees during the second iteration. At the beginning of the second iteration, in the remaining graph, each (uncolored) node has at most $\frac{d}{{^{(i+1)}1.99 \cdot 20}}$ out-neighbors, 
with high probability. %
Similarly to the first iteration, each remaining node receives a fresh palette of size $Q_i$. 
Again, applying the same process, after we color layers $H_1, H_2,..,H_{\ell}$ in the second iteration, each node remains uncolored with probability 
at most $2^{-f(i)}$. With a similar analysis, we conclude that in the graph induced by nodes that remain uncolored at the end of the second iteration, each node has out-degree at most
$\frac{d}{{^{(i+2)}1.98\cdot 20}}$, with high probability.
\end{proof}

\paragraph{Re-computing the $H$-partition}
At this point, we are done with the coloring of phase $i$. As a preparation step for phase $i+1$, 
we compute a new $H$-partition of the graph induced by the uncolored nodes. 
The new $H$-partition has degree $d_{i+1} \le \frac{d}{{^{(i+1)}2}}$ and size $O(\frac{\log n}{2^{i+1}})$.

\begin{lemma}\label[lemma]{sec:Coloring for High-Arboricity Graphs:lemma:h-partition}
 Given an $H$-partition with 
 degree at most $\frac{d}{{^{(i+2)}1.98\cdot 20}}$ and size $O(\frac{\log n}{2^{i}})$,
 there is an $O(\frac{\log n}{2^{i+1}})$-round deterministic distributed algorithm that 
 computes an $H$-partition with 
 degree at most $\frac{d}{{^{(i+1)}2}}$ and size $O(\frac{\log n}{2^{i+1}})$.
\end{lemma}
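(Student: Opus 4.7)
The plan is to apply the standard $H$-partition algorithm of \Cref{sec:Preliminaries:H-partition:algorithm} from scratch to the subgraph $G'$ induced by the currently uncolored nodes, with a carefully chosen parameter $\eps'$. Since the orientation inherited from the old $H$-partition restricts to an acyclic orientation of $G'$ in which every node has out-degree at most $\alpha' := \frac{d}{{^{(i+2)}1.98\cdot 20}}$, the arboricity of $G'$ is itself bounded by $\alpha'$, so \Cref{sec:Preliminaries:H-partition:algorithm} applies on $G'$ with $\alpha$ replaced by $\alpha'$.

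First I would pick $\eps'$ so that $(2+\eps')/2 = 2^{2^{i+1}}$, i.e., $\eps' = 2^{2^{i+1}+1} - 2$. Invoking \Cref{sec:Preliminaries:H-partition:algorithm} on $G'$ with this parameter produces deterministically, in $O(\log_{(2+\eps')/2} |V(G')|) = O(\log n / 2^{i+1})$ rounds, an $H$-partition of $G'$ whose size is $\lceil \log_{(2+\eps')/2} |V(G')|\rceil = O(\log n / 2^{i+1})$ and whose degree is at most $(2+\eps')\cdot \alpha' = 2^{2^{i+1}+1} \cdot \alpha'$. Both the round complexity and the size claim of the lemma are then immediate from this single choice of $\eps'$.

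The only remaining step is to verify that the resulting degree bound $2^{2^{i+1}+1} \cdot \frac{d}{{^{(i+2)}1.98\cdot 20}}$ is at most $\frac{d}{{^{(i+1)}2}}$, equivalently
\[
2^{2^{i+1}+1} \cdot {^{(i+1)}2} \;\le\; 20 \cdot {^{(i+2)}1.98}.
\]
This is the main obstacle of the proof, and it is pure tetration arithmetic. Using the identity ${^{(i+2)}1.98} = 1.98^{{^{(i+1)}1.98}}$, the right-hand side is an exponential whose exponent is a tower of height $i+1$, while the left-hand side is that same tower (with a slightly larger base) multiplied by the factor $2^{2^{i+1}+1}$, which grows only doubly-exponentially in $i$. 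I would confirm the inequality by a direct check for the first few values of $i$ (where the gap between ${^{(i+1)}2}$ and $1.98^{{^{(i+1)}1.98}}$ is already comfortable thanks to the factor of $20$) and then argue by induction on $i$ that the extra tetration level on the right absorbs the doubly-exponential slack on the left, uniformly in $i$. Once this inequality is in hand, the output of \Cref{sec:Preliminaries:H-partition:algorithm} on $G'$ is precisely the $H$-partition required.
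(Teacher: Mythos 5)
Your proposal is correct and takes essentially the same route as the paper: both re-apply the $H$-partition algorithm of \Cref{sec:Preliminaries:H-partition:algorithm} to the subgraph induced by the still-uncolored nodes with a hand-picked parameter, and both ultimately rest on the same tetration inequality of the form $2^{2^{i+1}+1}\cdot {^{(i+1)}2} \le O(1)\cdot {^{(i+2)}1.98}$. The only difference is which of the two bounds is made immediate by the parameter choice --- you set $\eps'=2^{2^{i+1}+1}-2$ so the size and round complexity fall out at once and the degree bound is what must be verified, while the paper sets $\eps=16\cdot{^{(i+2)}1.98}/{^{(i+1)}2}$ so the degree bound is immediate and the size bound is what must be verified --- and your explicit remark that the inherited acyclic orientation bounds the arboricity of the remaining subgraph is a detail the paper leaves implicit.
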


\begin{proof}
We set the parameter $\eps>0$ of the $H$-partition of~\Cref{sec:Preliminaries:H-partition:algorithm}, to a value such that the degree of the $H$-partition is
$(2+\eps )\frac{d}{{^{(i+2)}1.98\cdot 20}}\le 
\frac{d}{^{(i+1)}2}$
and the size of the $H$-partition is $\ell=\frac{\log n}{\log \eps} \le \frac{\log n}{2^{i+1}}.$
In particular, we set $\eps=16\frac{^{(i+2)}1.98}{^{(i+1)}2}$, and compute an $H$-partition with degree
$d_{i+1} 
\le 
\frac{d}{^{(i+1)}2}$
and size $\ell \le \frac{\log n}{2^{i+1}}$. The round complexity of recomputing the $H$-partition is at most $O(\frac{\log n}{2^{i+1}})$,
as explained in~\Cref{sec:Preliminaries:H-partition:algorithm}.
\end{proof}

\section{Coloring for Low-Arboricity Graphs}\label{sec:Coloring for Low-Arboricity Graphs}
In this section, we present two randomized distributed algorithms that on
any $n$-node graph with arboricity $\alpha$, with high probability, compute respectively
\begin{itemize}
 \item an $O(\alpha \log \alpha )$-coloring in $O(\log n)$ rounds, and
 \item an $O(\alpha)$-coloring in $O(\log n \cdot \log \alpha)$ rounds.
\end{itemize} 
In particular, we prove the following two lemmas in~\Cref{sec:Low-Arboriciry Partial Coloring} and~\Cref{sec: Low-Arboricity Coloring:O(a) algorithm}, respectively.

\begin{lemma}\label[lemma]{lem:coloringforLowArb} There is an $O(\log n)$-round randomized distributed algorithm that partially colors any $n$-node graph with arboricity $\alpha$,
using $O(\alpha \log \alpha)$ colors,
in a manner that the remaining graph has no path longer than $O(\log n)$, with high probability.
\end{lemma}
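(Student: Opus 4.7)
The plan is to combine the $H$-partition machinery of \Cref{sec:Preliminaries:orientation} with a single-round-per-layer randomized coloring that gives each node several simultaneous color candidates. First, I would invoke \Cref{sec:Preliminaries:H-partition:algorithm} with a small constant $\varepsilon$ to compute, in $O(\log n)$ rounds, an $H$-partition with out-degree $d \le (2+\varepsilon)\alpha = O(\alpha)$ and size $\ell = O(\log n)$, together with its acyclic orientation. Using a shared palette of size $Q = C\alpha\log\alpha$ for a sufficiently large constant $C$, I then process the layers $H_\ell, H_{\ell-1}, \ldots, H_1$ in sequence, one round per layer. In its own layer's round, node $v$ picks a uniformly random subset $S_v \subseteq \{1,\ldots,Q\}$ of size $f = K\log\alpha$, exchanges $S_v$ with its neighbors, and permanently commits to any color in $S_v$ that is neither a permanent color of an out-neighbor nor a tentative same-round pick of a same-layer out-neighbor. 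The total round complexity is $O(\log n)$ and the palette size is $O(\alpha\log\alpha)$, as required.

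The per-node survival analysis is direct: at most $(f+1)d$ colors are blocked for $v$ (permanent colors of out-neighbors, plus the at most $fd$ tentative picks of same-layer out-neighbors), so each of $v$'s picks falls in the blocked set with probability at most $(f+1)d/Q = O(1/C)$, and the probability that all $f$ picks are blocked is at most $(O(1/C))^{f}$. Choosing $C$ large and $f = K\log\alpha$ for a sufficiently large $K$ then gives a per-node survival probability of at most $\alpha^{-K'}$ for any desired constant $K'$. Crucially, whether $v$ remains uncolored depends only on the random choices made within $v$'s $1$-hop neighborhood in $G$, so the uncolored events for two vertices at graph distance at least $3$ in $G$ are mutually independent.

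To conclude that the residual graph has no path longer than $O(\log n)$, I would proceed in two phases. First, a Chernoff concentration on the above per-node survival bound shows that w.h.p.\ every vertex has at most $O(\log n/\log\alpha)$ uncolored out-neighbors in the $H$-partition orientation, so the residual graph inherits a DAG orientation of very low out-degree. Second, for any candidate simple path $P = v_1,\ldots,v_L$ with $L = \Theta(\log n)$, extract a subset $I \subseteq V(P)$ of $\Omega(L/\alpha^{O(1)})$ vertices that are pairwise at graph distance $\ge 3$ in $G$ (possible because in a graph of degeneracy $\alpha$ each vertex's $2$-hop neighborhood contains only $O(\alpha^2)$ vertices). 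The events that all of $I$ remains uncolored are then mutually independent, each with probability at most $\alpha^{-K'}$. Union-bounding over candidate residual paths, counted by decomposing at local maxima in the low-out-degree DAG produced by the first phase, and taking $K'$ sufficiently large, drives the expected number of length-$L$ residual paths below $1/n^{\Omega(1)}$.

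The main obstacle is this last counting step. In a general bounded-arboricity graph, the number of simple paths of length $L$ is not controlled by $\alpha$ and $L$ alone: a star on $n-1$ leaves has arboricity $1$ yet $\Theta(n^2)$ paths of length $2$. Hence the counting cannot be performed in $G$ directly and has to be carried out in the much sparser \emph{residual} graph, relying on its low effective out-degree produced by the first phase together with the distance-$3$ independence of uncolored events to tame the residual correlations. Balancing the survival probability $\alpha^{-K'}$, the extracted independent-set size $|I|$, and the count of residual paths is what ultimately dictates how large the palette constant $C$ and pick-count constant $K$ must be chosen.
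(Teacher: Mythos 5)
Your algorithm is essentially the paper's: an $H$-partition with out-degree $d = O(\alpha)$ and $O(\log n)$ layers, then processing layers from last to first with each node drawing $\Theta(\log\alpha)$ random colors from a palette of size $\Theta(\alpha\log\alpha)$ and committing to any unblocked one; and your per-node survival bound of $\alpha^{-\Theta(1)}$ matches the paper's $d^{-2}$. Where you go off track is the last part — bounding the length of residual paths — and this is precisely where you yourself flag the ``main obstacle.''

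The simplification you are missing is that you only ever need to bound \emph{directed} paths in the $H$-partition orientation, not arbitrary simple paths. This is because the cleanup step that follows (the paper's deterministic finisher) has each uncolored node wait only for its \emph{out}-neighbors to commit and then pick a free color from a palette of size $d+1$; the number of rounds this takes is governed exactly by the longest directed path among uncolored nodes. Once you restrict attention to directed paths, the counting you were worried about (your star example, the need to decompose paths at local maxima, the Chernoff pass to bound residual out-degree) evaporates: in a graph oriented with out-degree at most $d$, the number of directed paths with $L$ edges is at most $n\cdot d^{L}$ (choose the start, then at most $d$ choices per step). Combined with the per-vertex survival bound, a one-line union bound finishes: the expected number of surviving directed paths of length $L=\Theta(\log n)$ is at most $n\cdot d^{L}\cdot d^{-2L}=n\,d^{-L}\le n^{-\Omega(1)}$. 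There is no need for the residual-degree Chernoff step, the distance-$3$ independent-set extraction, or the local-maxima decomposition.

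Separately, your independence claim is stated incorrectly. Whether $v$ is colored does \emph{not} depend only on randomness in $v$'s $1$-hop neighborhood: an out-neighbor $u$ in a higher layer commits a permanent color that depends on $u$'s own out-neighbors, and so on, so $v$'s status depends on the entire downstream cone from $v$ in the orientation. Two vertices at distance $3$ can therefore have overlapping dependency cones, and their survival events need not be independent. What is true — and what the paper implicitly uses — is a \emph{conditional} bound: for any fixed realization of all randomness other than $v$'s own picks, the blocked set has size at most half the palette, so $\Pr[v\text{ uncolored}\mid \text{everything downstream}]\le d^{-2}$. This conditional bound composes cleanly along any directed path $v_1\to v_2\to\cdots\to v_L$, since each $v_i$ is strictly upstream of $v_{i+1},\ldots,v_L$: peeling off $v_1$, then $v_2$, and so on gives $\Pr[\text{all } v_i \text{ uncolored}]\le d^{-2L}$. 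This is both the correct formalization and exactly what makes the directed-path union bound go through without any ad hoc sparsification of the path into far-apart vertices.
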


\begin{lemma} \label[lemma]{lem:coloringforLowArb2}
There is an $O(\log n \cdot \log \alpha)$-round randomized distributed algorithm that partially colors any $n$-node graph with arboricity
$\alpha$, using $(2+\eps)\cdot \alpha$ colors, for a constant $0<\eps \le1$, in a manner that the remaining graph has no path longer than $O(\log n)$, with high probability.
\end{lemma}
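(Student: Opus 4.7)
The plan is to reuse the $H$-partition plus iterated random-color framework of Section~\ref{sec:Coloring for High-Arboricity Graphs}, but reduce the color budget to $(2+\eps)\alpha$ at the price of $\Theta(\log \alpha)$ passes. First I would compute, via Lemma~\ref{sec:Preliminaries:H-partition:algorithm} with parameter $\eps/3$, an acyclic orientation and an $H$-partition with $\ell = O(\log n)$ layers and maximum out-degree $d \le (2+\eps/3)\alpha$; this costs $O(\log n)$ rounds. The main loop then runs $T = \Theta(\log \alpha)$ iterations, each sweeping the layers in reverse order $H_\ell, H_{\ell-1}, \ldots, H_1$. When a layer $H_j$ is processed, every still-uncolored node $v \in H_j$ picks a single color uniformly at random from its fresh palette of size $(2+\eps)\alpha$, and commits permanently if that color is not blocked by an out-neighbor (either one already permanently colored, or a same-layer out-neighbor that picked the same color this round). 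The round count is $T \cdot \ell = O(\log n \cdot \log \alpha)$, as required.

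For the per-iteration analysis, since $v$ has at most $d \le (2+\eps/3)\alpha$ out-neighbors, at most $d$ colors are blocked, leaving at least $(2+\eps)\alpha - d \ge (2\eps/3)\alpha$ free colors. Hence $v$ becomes permanently colored in any given iteration with probability at least $p = \tfrac{2\eps/3}{2+\eps} = \Omega(\eps)$, regardless of the random choices made outside its $1$-hop out-neighborhood in that iteration. After $T$ passes, each individual vertex survives uncolored with probability at most $(1-p)^T \le \alpha^{-c}$, where the constant $c$ can be made arbitrarily large by enlarging the hidden constant in $T = \Theta(\log \alpha)$.

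The no-long-path conclusion is the technical heart. I would argue it by a shattering argument tailored to the oriented structure. Any out-directed path of length $L$ in the $(2+\eps/3)\alpha$-out-degree orientation has at most $n \cdot d^L$ candidates, and any fixed such path survives the process with probability at most $\alpha^{-\Omega(cL)}$ once we exploit that a suitably spaced subset of its vertices have mutually independent survival events: each iteration uses fresh coins, and the outcome of an iteration at $v$ depends only on the coins in its $1$-hop out-neighborhood in that iteration. A union bound then gives $n \cdot d^L \cdot \alpha^{-\Omega(cL)} = o(n^{-1})$ for $L = \Omega(\log n)$ and $c$ large. To pass from out-directed paths to undirected paths, I would use that the $H$-partition has only $\ell = O(\log n)$ layers, so any long undirected path in the remaining graph decomposes into a bounded number of maximal out-directed segments plus at most one within-layer segment per layer visited; combined with the out-directed bound, this forces any undirected path in the remaining graph to have length $O(\log n)$.

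The main obstacle will be making the shattering step quantitatively tight: across $T = \Theta(\log \alpha)$ iterations a node's survival event depends on coins within its $T$-hop out-neighborhood, so a pairwise-independent witness set along a candidate path must be spaced by at least $2T+1$, shrinking the usable witness count by a $\log \alpha$ factor and forcing $c$ to be taken $\log \alpha$ times larger. I expect this to be absorbable by enlarging the hidden constant in $T = \Theta(\log \alpha)$, since each extra pass multiplies the per-vertex survival probability by another constant factor; the remaining care is to verify that the same witness spacing works simultaneously for the union bound over the $n \cdot d^L$ candidate out-paths, and to handle within-layer path segments (where the orientation is by ID) by the same mechanism.
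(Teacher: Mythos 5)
Your algorithm (full sweeps over the layers) is a harmless rescheduling of the paper's (finalize each layer with $\Theta(\log\alpha)$ iterations before descending to the next); both give $O(\log n\cdot\log\alpha)$ rounds and the same per-iteration blocking bound of at most $d$ colors out of $(2+\eps)\alpha$. The problem is the analysis of the no-long-path step. The shattering/witness-spacing argument does not close: if a vertex's survival across $T$ iterations is charged to its $\Theta(T)$-hop out-neighborhood, you retain only $L/\Theta(T)=L/\Theta(\log\alpha)$ independent witnesses along a length-$L$ path. To beat the $n\cdot d^{L}$ union bound (with $d=\Theta(\alpha)$) you then need each witness to survive with probability $\alpha^{-\Omega(\log\alpha)}$, which requires $T=\Omega(\log^{2}\alpha)$ iterations; but that blows up the spacing to $\Theta(\log^{2}\alpha)$, demanding $T=\Omega(\log^{3}\alpha)$, and so on. ``Enlarging the hidden constant in $T=\Theta(\log\alpha)$'' only scales the exponent in $\alpha^{-c}$ linearly, so the factor-$\log\alpha$ loss is not absorbable, and the recursion never converges to the claimed $O(\log n\cdot\log\alpha)$ bound.

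The paper's argument is simpler and dispenses with spacing entirely. Because the orientation is acyclic, a vertex $v$'s survival is determined by its own coins and the coins of vertices reachable from $v$ along out-edges; in particular it is independent of the coins of any in-neighbor. Moreover, conditioned on an arbitrary realization of all coins other than $v$'s, at most $d$ colors are blocked in every iteration $v$ participates in, so $v$ still fails with probability at most $d^{-2}$. For a directed out-path $v_1\to v_2\to\cdots\to v_L$, each $v_i$ is an in-neighbor of $v_{i+1}$, so peeling off the conditional bound for $v_1,v_2,\ldots$ in turn yields $\Pr[\text{all survive}]\le d^{-2L}$ with no independence loss; the union bound over $n\cdot d^{L}$ directed out-paths then finishes for $L=\Theta(\log n)$. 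Finally, you need only directed out-paths, not undirected ones: the deterministic clean-up (\Cref{sec:Low-Arb-Coloring:deterministic-coloring}) has each uncolored vertex wait only for its out-neighbors, so the waiting time is bounded by the longest directed out-path. The decomposition of undirected paths into per-layer segments is extra work the lemma does not require.
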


After partially coloring the graph with the algorithms of~\Cref{lem:coloringforLowArb} or~\Cref{lem:coloringforLowArb2},
we apply the $O(\log n)$-round deterministic distributed algorithm of~\Cref{sec:Low-Arb-Coloring:deterministic-coloring},
to color the remaining graph using $O(\alpha)$ extra colors.

We note that the algorithms we present in this section are more interesting for coloring graphs with arboricity at most $O(\log n)$,
since for graphs with larger arboricity, we can apply the algorithm of~\Cref{sec:Coloring for High-Arboricity Graphs} to obtain a $((2+\eps)\cdot \alpha + O(\log n \cdot \log \log n))$-coloring
in $O(\log n)$ rounds.

\subsection{A Randomized $O(\alpha \log \alpha)$ Partial Coloring in $O(\log n)$ rounds}\label[section]{sec:Low-Arboriciry Partial Coloring}
Let $G$ be a $n$-node graph with arboricity $\alpha$. In this section, we provide an $O(\log n)$-round
randomized distributed algorithm that partially colors the graph with $O(\alpha \log \alpha)$ colors,
in a manner that the remaining graph has no path longer than $O(\log n)$, with high probability.

A first preparation step of the algorithm is to compute in $O(\log n)$ rounds an $H$-partition
with degree $d \le 3\alpha$ and size $O(\log n)$, together with an acyclic orientation of the edges, such
that the maximum out-degree is at most $d \le 3\alpha$.

The algorithm colors layers $H_1, H_2,...,H_{\ell}$ gradually, starting from layer $H_{\ell}$ and proceeds backwards, ending with the first layer $H_1$. Initially,
each node receives a palette of $d\log d$ colors.
When layer $H_j$, $1 \le j \le \ell$ is colored, each remaining node $v \in H_j$ performs the following algorithm.

\begin{framed}\label{sec:Coloring for Low-Arboricity Graphs:randomized algorithm}
{\setlength{\parindent}{0cm}
\textbf{Low-Arb Coloring Algorithm, run by each node $v\in H_j$ \\}
In iteration $i \in \{1,2,3,4\}$:
\begin{itemize}
 \item Node $v$ selects $\frac{\log d }{2}$ random colors among $d \log d$ colors.
 \item Node $v$ sends the selected colors to the neighbors, and receives their selected colors.
 \item If there is a selected color $x$ such that no out-neighbor has selected $x$ in this round, or picked $x$ as its permanent color
in the previous rounds, node $v$ gets colored permanently with $x$, and informs its
neighbors.
\end{itemize}}
\end{framed}

\begin{lemma}\label[lemma]{sec:Coloring for Low-Arboricity Graphs:probability for one color}
After partially coloring the graph in $O(\log n)$ rounds, each node $v \in V$ remains uncolored with probability at most $d^{-2}$.
Furthermore, this holds independently of the events of other nodes being colored.

\end{lemma}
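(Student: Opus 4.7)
The plan is to decouple two things: the round complexity, which is a direct consequence of the $H$-partition's size, and the per-node failure probability, which reduces to a counting argument over ``blocked'' palette colors. For the round complexity, note that the $H$-partition has size $\ell = O(\log n)$, the algorithm processes its layers one at a time (from $H_\ell$ backwards to $H_1$), and each layer consumes exactly $4$ communication rounds (one per iteration $i\in\{1,2,3,4\}$). This gives a total of $4\ell = O(\log n)$ rounds.

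For the failure probability, fix a node $v\in H_j$ and an iteration $i$, and count the palette colors that can be blocked by $v$'s out-neighbors at the moment $v$ samples. Since $v$ has at most $d$ out-neighbors in the subgraph induced by $H_j\cup\cdots\cup H_\ell$, and each such out-neighbor either (i) is already permanently colored, in which case it blocks exactly one color, or (ii) still lies in $H_j$ and is uncolored at the start of iteration $i$, in which case it samples $\frac{\log d}{2}$ random colors and therefore blocks at most $\frac{\log d}{2}$ of them, the total number of blocked colors is at most $d\cdot \frac{\log d}{2}=\frac{d\log d}{2}$. Out of the $d\log d$ colors in $v$'s palette, at least $\frac{d\log d}{2}$ are thus ``free'', so each of $v$'s $\frac{\log d}{2}$ samples in iteration $i$ lands in the free set with probability at least $1/2$. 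Since the samples are drawn using only $v$'s private randomness, the probability that none of them is free is at most $(1/2)^{\log d/2}=d^{-1/2}$.

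Chaining this estimate over the four iterations yields
$$Pr[v\text{ remains uncolored}] \;\le\; \bigl(d^{-1/2}\bigr)^4 \;=\; d^{-2}.$$
Crucially, the $d^{-1/2}$ per-iteration bound holds pointwise for every realization of the random bits of $V\setminus\{v\}$, because the counting step is adversarial in how the other nodes' samples land (they only ever block $\frac{\log d}{2}$ colors per out-neighbor); this is exactly the independence statement in the lemma, and is what later enables a Chernoff-style bound on $v$'s residual out-degree. The main obstacle I expect is the bookkeeping across the four iterations: one has to verify that the ``at most $\frac{\log d}{2}$ blocked colors per out-neighbor'' invariant really does persist even after some out-neighbors become permanently colored between iterations, so that the $d^{-1/2}$ per-iteration bound multiplies cleanly to $d^{-2}$.
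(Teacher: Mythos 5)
Your proof is correct and takes the same route as the paper: bound the round complexity via the $H$-partition size, count at most $d\cdot\frac{\log d}{2}$ blocked palette colors to get a per-iteration failure probability of $(1/2)^{\log d/2}=d^{-1/2}$ that holds pointwise over the other nodes' randomness, then multiply over the four iterations. The concern you raise at the end is already resolved by your own case analysis: a permanently colored out-neighbor blocks exactly one color, which is at most $\frac{\log d}{2}$, so the ``at most $\frac{\log d}{2}$ blocked colors per out-neighbor'' bound only improves as neighbors get colored between iterations, and the $d^{-1/2}$ factor multiplies cleanly.
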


\begin{proof}
First, we discuss the time complexity of the algorithm. The $H$-partition has $O(\log n)$ layers and
we have 4 iterations per layer of the $H$-partition.
Hence, the whole algorithm has round complexity $O(\log n)$.

We now argue that once the algorithm is completed, in the remaining graph, each node $v \in V$ remains uncolored with probability at most $d^{-2}$,
independently of the events of other nodes being colored in the graph.

Consider an arbitrary layer $H_j$, $1 \le j \le \ell$ of the $H$-partition.
A node $v \in H_j$ has at most $d$ out-neighbors in the graph induced by layers $\cup_{y=j}^{\ell} H_y$.
In each iteration $i$, each permanently colored out-neighbor of $v$ blocks at most one color from $v$'s palette.
Each out-neighbor that is in the same
layer $H_j$ and remains uncolored in iteration $i$,
blocks at most $\frac{\log d}{2}$ colors from $v$'s palette.

Thus, $v$ has at least $\frac{d \log d}{2}$ colors that are not blocked by its out-neighbors, when we select random colors for $v$. Therefore, the probability that $v$ gets 
permanently colored with a color $x$ that it selects in iteration $i$ is at least 1/2. Moreover, this holds independently of the events of other nodes being colored. This implies
that in each iteration, independently of the events of other nodes being colored, we have
$$Pr[\text{v is not colored}] \le 2^{-\frac{\log d }{2}} = 1/\sqrt{d}.$$

In total, after 4 iterations we get that, independently of the events of other nodes being colored, we have
$$Pr[\text{v is not colored}] \le (1/\sqrt{d})^{4}=d^{-2}.$$
\end{proof}

Next, we prove that in the remaining graph, there exists no path longer than $O(\log n)$, with high probability.
This allows us to color the remaining graph deterministically in $O(\log n)$ rounds, using $d+1$ extra colors, as we explain in~\Cref{sec:Low-Arb-Coloring:deterministic-coloring}.

\begin{lemma}\label[lemma]{sec:Coloring for Low-Arboricity Graphs:lemma:path length}
The remaining graph has no directed path longer than $O(\log n)$, w.h.p.
\end{lemma}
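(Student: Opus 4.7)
The plan is a first-moment argument on directed walks in the $H$-partition orientation. Concretely, for $L = \Theta(\log n)$ chosen below, I would bound the expected number of directed paths of length $L$ whose every vertex survives uncolored, and show this expectation is at most $n^{-c}$ for a desirably large constant $c$.

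First, I would count the candidate paths. Since the acyclic orientation from the $H$-partition has maximum out-degree at most $d \le 3\alpha$, the number of directed walks $v_1 \to v_2 \to \cdots \to v_L$ in $G$ is at most $n \cdot d^{L-1}$: there are at most $n$ choices for the starting vertex $v_1$, and at each subsequent step at most $d$ choices for an out-neighbor. Next, I would invoke \Cref{sec:Coloring for Low-Arboricity Graphs:probability for one color}, which says that each vertex remains uncolored with probability at most $d^{-2}$, and crucially that this event is \emph{independent} across vertices. Hence, for any fixed directed path $v_1 \to \cdots \to v_L$,
$$\Pr\big[v_1, \dots, v_L \text{ all uncolored}\big] \;\le\; (d^{-2})^L \;=\; d^{-2L}.$$
A union bound over all directed walks of length $L$ then yields
$$\Pr\big[\exists \text{ surviving directed path of length } L\big] \;\le\; n \cdot d^{L-1} \cdot d^{-2L} \;=\; \frac{n}{d^{L+1}}.$$
Choosing $L = \lceil (c+2)\log_d n \rceil$, which is $O(\log n)$ whenever $d$ is at least a constant (as is the case in the nontrivial regime where the algorithm selects $\tfrac{\log d}{2} \ge 1$ colors), makes the right-hand side at most $n^{-c}$. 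With high probability, every directed path in the remaining graph therefore has length $O(\log n)$.

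The main (and really only) subtlety is the independence asserted by \Cref{sec:Coloring for Low-Arboricity Graphs:probability for one color}: the product bound $d^{-2L}$ above crucially depends on the uncolored-events for distinct vertices along a path being independent, even though a vertex's coloring outcome depends on the random choices of its out-neighbors. That independence has already been argued in the preceding lemma, so the rest of this proof reduces to the routine walk-counting union bound sketched above.
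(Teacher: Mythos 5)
Your proof is correct and follows essentially the same route as the paper: count directed walks of length $L$ using the bound $n\cdot d^{L-1}$ coming from the out-degree bound of the $H$-partition orientation, apply the per-vertex failure probability $d^{-2}$ from \Cref{sec:Coloring for Low-Arboricity Graphs:probability for one color} together with the independence it asserts to get $d^{-2L}$, and union bound. The one small refinement over the paper is your explicit choice $L=\lceil(c+2)\log_d n\rceil$, which makes the dependence on $d$ transparent, whereas the paper simply takes $L=\log n$ and implicitly assumes $d$ is at least a suitable constant; your remark that the independence claim is the only real subtlety is also well placed, since the product bound is exactly what that lemma is designed to license.
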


\begin{proof}
There are at most $n \cdot d^{\log n}$ different ways to select a path of length ${\log n}$.
For each such path, the probability that all of its nodes stay is at most
$d^{-2{\log n}}$. 
By a union bound over all such paths, we conclude that with probability
$1-n\cdot d^{\log n} \cdot d^{-2\log n}\ge 1- n^{-10}$, no such path exists.
\end{proof}

\subsection{A Randomized $O(\alpha)$ Partial Coloring in $O(\log n \cdot \log \alpha)$ Rounds}\label[section]{sec: Low-Arboricity Coloring:O(a) algorithm}
In this section, we present an $O(\log n \cdot \log \alpha)$-round randomized distributed algorithm that colors a graph $G$ with arboricity $\alpha$,
using $(2+\eps)$ colors, for a small constant $0< \eps \le 1$, in a manner that the remaining graph has no path longer than $O(\log n)$, with high probability.

The algorithm is similar to the randomized distributed algorithm of \Cref{sec:Low-Arboriciry Partial Coloring}. 
More specifically, it first computes an $H$-partition with degree $d \le (2+\frac{\eps}{2})\cdot \alpha$ and size $ \ell=\lceil \frac{1}{\eps}\log n\rceil$.
Each node receives a palette of size $(2+\eps)\cdot\alpha$ and when we color layer $H_j$, $1 \le j \le \ell$, each (uncolored) node performs the following algorithm.

\begin{framed}
{\setlength{\parindent}{0cm}
\textbf{Tradeoff-Low-Arb Coloring Algorithm, run by each node $v \in H_j$: \\}
In iteration $i\in \{1,2,...,\lceil \frac{2\cdot(2+\eps)}{\eps}\rceil \cdot \log d \}$,
\begin{itemize}
 \item Node $v$ selects one random color $x$ among $(2+\eps)\cdot \alpha$ colors.
 \item Node $v$ sends the selected color $x$ to its neighbors, and receives their selected colors.
 \item If no out-neighbor has selected $x$ in this round, or picked $x$ as its permanent color in the previous rounds, node $v$ gets colored permanently with $x$, and informs its neighbors.
\end{itemize}}
\end{framed}

\begin{lemma}\label[lemma]{sec:Coloring for Low-Arboricity Graphs:tradeoffalgorithm}
After partially coloring the graph in $O(\log n \cdot \log \alpha)$ rounds, each node $v \in V$ remains uncolored with probability at most $d^{-2}$.
Furthermore, this holds independently of the events of other nodes being colored.
\end{lemma}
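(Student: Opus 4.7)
The plan is to mirror the proof of \Cref{sec:Coloring for Low-Arboricity Graphs:probability for one color}, adapting the one-color-at-a-time analysis to the longer iteration schedule that buys us the stronger $(2+\eps)\alpha$ palette bound. The round count is straightforward: the $H$-partition has $\ell = \lceil \tfrac{1}{\eps}\log n\rceil$ layers, each processed with $\lceil \tfrac{2(2+\eps)}{\eps}\rceil \cdot \log d$ iterations of one round each, yielding a total of $O(\log n \cdot \log d) = O(\log n \cdot \log \alpha)$ rounds.

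For the failure probability, I would fix a layer $H_j$, an uncolored node $v \in H_j$, and a single iteration within the processing of $H_j$. The key observation is that $v$ has at most $d \le (2+\eps/2)\alpha$ out-neighbors in $\bigcup_{y \ge j} H_y$, and each such out-neighbor rules out at most one color for $v$ in the current iteration: a permanently colored out-neighbor blocks its one permanent color, and an out-neighbor still uncolored (necessarily in $H_j$) draws exactly one random color this round. Since $v$ picks its own color $x$ uniformly at random from a palette of size $(2+\eps)\alpha$, a union bound over the at most $d$ out-neighbors gives
\[
\Pr[x \text{ collides with some out-neighbor's color}] \le \frac{d}{(2+\eps)\alpha} \le \frac{2+\eps/2}{2+\eps} = 1 - \frac{\eps}{2(2+\eps)},
\]
so $v$ gets permanently colored in any one iteration with probability at least $p := \eps/(2(2+\eps))$. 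Multiplying this per-iteration success probability across the $T = \lceil 2(2+\eps)/\eps\rceil \cdot \log d$ iterations yields
\[
\Pr[v \text{ remains uncolored}] \le (1-p)^T \le e^{-pT} \le d^{-2},
\]
where the constant absorbed into the ceiling is taken large enough that $pT \ge 2\ln d$.

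The main obstacle I expect is making the independence claim precise enough to support the downstream path-style wrap-up, analogous to \Cref{sec:Coloring for Low-Arboricity Graphs:lemma:path length}. The derivation above bounds $v$'s failure probability using only $v$'s own uniform color choices and the permanent-color history of $v$'s out-neighbors; crucially, each per-iteration bound is obtained via a union bound over out-neighbors, so it holds conditionally on any fixed outcome for the random choices of nodes that are not out-neighbors of $v$. Processing the $H$-partition from $H_\ell$ down to $H_1$ and conditioning iteration by iteration on the earlier history, this gives a clean product bound along any directed path $v_1 \to v_2 \to \cdots \to v_k$ of the form $\Pr[\text{all } v_i \text{ remain uncolored}] \le d^{-2k}$, which is precisely the conditional-independence statement needed to port the argument of \Cref{sec:Coloring for Low-Arboricity Graphs:lemma:path length} to the low-arboricity, trade-off setting.
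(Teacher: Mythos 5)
Your proof is correct and takes essentially the same route as the paper's: compute the round count from $\ell = \lceil\frac{1}{\eps}\log n\rceil$ layers times $\lceil\frac{2(2+\eps)}{\eps}\rceil\log d$ iterations, lower-bound the per-iteration success probability by $\eps/(2(2+\eps))$ via counting blocked colors against the $(2+\eps)\alpha$ palette, and multiply across iterations. You are right to flag the constant in the iteration count: the paper's chain $\bigl(1-\frac{\eps}{2(2+\eps)}\bigr)^{\lceil 2(2+\eps)/\eps\rceil\log d} \le (1/4)^{\log d}$ does not literally hold since $(1-p)^{1/p} \ge 1/e > 1/4$, so the ceiling constant must be bumped up (as you note) to get $d^{-2}$, a harmless fix that neither affects the asymptotic round complexity nor the downstream path argument.
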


\begin{proof}
First, we discuss the time complexity of the algorithm. The $H$-partition has $O(\log n)$ layers and
we have $\lceil \frac{2\cdot(2+\eps)}{\eps}\rceil \cdot \log d=O(\log \alpha)$ iterations per layer of the $H$-partition.
Hence, the whole algorithm has round complexity $O(\log n \cdot \log \alpha)$.

We now argue that once the algorithm is completed, in the remaining graph, each node $v \in V$ remains uncolored with probability at most $d^{-2}$,
independently of the events of other nodes being colored in the graph.

Consider an arbitrary layer $H_j$, $1 \le j \le \ell$ of the $H$-partition.
A node $v \in H_j$ has at most $d \le (2+\frac{\eps}{2})\cdot \alpha$ neighbors in the graph induced by layers $\cup_{y=j}^{\ell} H_y$.
In any iteration $i$, each permanently colored out-neighbor of $v$, blocks at most one color from $v$'s palette.
Each out-neighbor that is in the same
layer $H_j$ and remains uncolored in iteration $i$,
blocks at most one color from $v$'s palette. Thus, in any iteration $i$, node
$v$ has at least $\frac{\eps}{2} \alpha$ colors that are not blocked by its out-neighbors.
Therefore, the probability that $v$ gets 
permanently colored with a color $x$ in iteration $i$ is at least $\frac{\eps \cdot \alpha}{2(2+\eps)\cdot \alpha}$.
Moreover, this holds independently of the events of other nodes being colored.

In total, after $\lceil \frac{2\cdot(2+\eps)}{\eps}\rceil \cdot \log d$ iterations, we get that (independently of the events of other nodes being colored), we have
$$Pr[\text{v is not colored}] \le (1-\frac{\eps \cdot \alpha}{2(2+\eps)\cdot \alpha})^{\lceil\frac{2\cdot(2+\eps)}{\eps}\rceil \log d}\le (\frac{1}{4})^{\log d} \le d^{-2}.$$
\end{proof}

At this point, we apply \Cref{sec:Coloring for Low-Arboricity Graphs:lemma:path length} to conclude that in the remaining graph there is no path longer than $O(\log n)$,
with high probability. Then, we apply the $O(\log n)$-round
deterministic algorithm of \Cref{sec:Low-Arb-Coloring:deterministic-coloring}, to color the remaining graph with $d+1$ extra colors.

\subsection{Deterministic Coloring}\label{sec:Coloring for Low-arbirtrary Graphs:deterministic step}
After we partially color the input graph $G$ with either of the algorithms of~\Cref{sec:Low-Arboriciry Partial Coloring} and~\Cref{sec: Low-Arboricity Coloring:O(a) algorithm},
in the remaining graph there is no path longer than $O(\log n)$, with high probability. 

In this section, we color deterministically the remaining graph as follows.
Each remaining (uncolored) node receives $d+1$ new colors and performs the following algorithm.

\begin{framed}\label{sec:Coloring for Low-Arboricity Graphs:deterministic algorithm}
{\setlength{\parindent}{0cm}
\textbf{Low-Arb Deterministic Coloring Algorithm, run by each uncolored node $v$: }
\begin{itemize}
 \item Node $v$ waits for all its remaining out-neighbors to be colored and removes their colors from its palette.
 \item It gets permanently colored with one remaining color $x$, and informs its neighbors.
\end{itemize}}
\end{framed}
\begin{lemma}\label[lemma]{sec:Low-Arb-Coloring:deterministic-coloring}
 After $O(\log n)$ rounds, every node is colored, with high probability.
\end{lemma}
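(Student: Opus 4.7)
The plan is to separate the argument into a correctness part and a timing part, both of which reduce to properties of the acyclic orientation and the bound on the longest directed path provided by the earlier lemmas.

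For correctness, first I would observe that the orientation produced in the preparation step of either partial coloring algorithm is acyclic with maximum out-degree at most $d$, and that removing the permanently colored nodes cannot introduce new edges or break acyclicity. Hence, at any point during the deterministic phase, every still-uncolored node $v$ has at most $d$ remaining out-neighbors in the acyclic subgraph. When $v$'s out-neighbors have all been colored, at most $d$ colors are forbidden from its fresh palette of size $d+1$, so at least one color is available; picking any such color produces a proper coloring with respect to out-neighbors. Because the orientation is acyclic, the only edges incident to $v$ that could cause a conflict with already-colored neighbors go out of $v$, and those are exactly the ones that get excluded, so the overall coloring on the remaining graph is proper.

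For the timing bound, I would define, for each remaining node $v$, its \emph{out-depth} $\Hd(v)$ as the length of the longest directed path in the remaining graph that starts at $v$ (so sinks have $\Hd(v)=0$). A straightforward induction on $\Hd(v)$ shows that $v$ receives its permanent color by the end of round $\Hd(v)+1$: sinks have no out-neighbors to wait for and color in round $1$, and a node at out-depth $k$ has all its out-neighbors at out-depth at most $k-1$, hence colored by the end of round $k$, so $v$ colors in round $k+1$. Combining this with \Cref{sec:Coloring for Low-Arboricity Graphs:lemma:path length}, which guarantees that with high probability every directed path in the remaining graph has length $O(\log n)$, we obtain $\Hd(v)=O(\log n)$ for every node, and therefore all nodes are permanently colored within $O(\log n)$ rounds, with high probability.

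The only mildly delicate point, and the one I would be careful about, is ensuring that the notion of ``path'' used in \Cref{sec:Coloring for Low-Arboricity Graphs:lemma:path length} matches the notion of ``directed path in the acyclic orientation of the remaining graph'' that drives the waiting chains of the algorithm; inspecting the proof of that lemma, the paths counted are indeed oriented paths following out-neighbors layer by layer, so the two notions coincide. Everything else amounts to standard greedy-coloring along the reverse topological order, and requires no further probabilistic analysis beyond invoking the previous high-probability bound on the longest directed path.
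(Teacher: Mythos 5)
Your proof is correct and follows essentially the same approach as the paper: both rely on the out-degree bound of $d$ to guarantee an available color from the palette of $d+1$, and both invoke \Cref{sec:Coloring for Low-Arboricity Graphs:lemma:path length} to bound the waiting time by the longest directed path, which is $O(\log n)$ with high probability. Your version simply makes the waiting argument explicit via induction on out-depth, whereas the paper leaves this step implicit.
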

\begin{proof}
Consider a remaining (uncolored) node $v$ that runs the above algorithm. 
Since it has at most $d$ remaining out-neighbors, there is always at least one available color to select the moment that we color node $v$.
Furthermore,
by \Cref{sec:Coloring for Low-Arboricity Graphs:lemma:path length}, 
there is no path longer than $O(\log n)$ in the remaining graph, with high probability; this implies that with high probability,
$v$ does not wait more
than $O(\log n)$ rounds until it gets permanently colored.
\end{proof}

\bibliographystyle{plainurl}
\bibliography{ref}

\appendix

\end{document}